\newtheorem{theorem}{\bf{Theorem}}
\newtheorem{lemma}{\bf{Lemma}}
\newtheorem{corollary}{\bf{Corollary}}
\newenvironment{breakablealgorithm}
{
		\begin{center}
			\refstepcounter{algorithm}
			\hrule height.8pt depth0pt \kern2pt
			\renewcommand{\caption}[2][\relax]{
				{\raggedright\textbf{\ALG@name~\thealgorithm} ##2\par}%
				\ifx\relax##1\relax 
				\addcontentsline{loa}{algorithm}{\protect\numberline{\thealgorithm}##2}%
				\else 
				\addcontentsline{loa}{algorithm}{\protect\numberline{\thealgorithm}##1}%
				\fi
				\kern2pt\hrule\kern2pt
			}
		}{
		\kern2pt\hrule\relax
	\end{center}
}
\newcommand{\var}{{\rm var}}
\newcommand{\bX}{\text{\boldmath{$X$}}}
\newcommand{\Cov}{\mbox{Cov}}
\newcommand{\btheta}{\text{\boldmath{$\theta$}}}
\newcommand{\bbM}{\mathbf{M}}
\newcommand{\bbH}{\mathbf{H}}
\newcommand{\bbU}{\mathbf{U}}
\newcommand{\bbV}{\mathbf{V}}
\newcommand{\bz}{\boldsymbol{z}}
\newcommand{\bbeta}{\boldsymbol{\beta}}
\newcommand{\bg}{\boldsymbol{g}}
\newcommand{\bJ}{\boldsymbol{J}}
\newcommand{\bU}{\boldsymbol{U}}
\newcommand{\bZ}{\boldsymbol{Z}}
\newcommand{\bzero}{\boldsymbol{0}}
\newcommand{\bK}{{\boldsymbol K}}
\newcommand{\bV}{\boldsymbol{V}}
\newcommand{\bbA}{\mathbf{A}}
\newcommand{\bv}{\boldsymbol{v}}
\def\T{{ \mathrm{\scriptscriptstyle T} }}
\def\argmin{\mathop{\rm argmin}}
\date{}
\begin{document}
	\title{Subsampled One-Step Estimation for Fast Statistical Inference}
	\author[1]{Miaomiao Su}
	\author[2]{Ruoyu Wang\thanks{Corresponding author: ruoyuwang@hsph.harvard.edu}}
	\affil[1]{School of Mathematical Sciences,  Beijing University of Posts and Telecommunications, Beijing, China, and Key Laboratory of Mathematics andInformation Networks (Beijing University of Posts and Telecommunications), Ministry of Education, China.}
	\affil[2]{Department of Biostatistics, Harvard University, Boston, Massachusetts, USA}
	
	\maketitle
	
	\begin{abstract}
		
		Subsampling is an effective approach to alleviate the computational burden associated with large-scale datasets. Nevertheless, existing subsampling estimators incur a substantial loss in estimation efficiency compared to estimators based on the full dataset. Specifically, the convergence rate of existing subsampling estimators is typically $n^{-1/2}$ rather than $N^{-1/2}$, where $n$ and $N$ denote the subsample and full data sizes, respectively. This paper proposes a subsampled one-step (SOS) method to mitigate the estimation efficiency loss through a one-step update based on the asymptotic expansions of the subsampling and full-data estimators. The resulting SOS estimator is computationally efficient and achieves a fast convergence rate of $\max(n^{-1}, N^{-1/2})$ rather than $n^{-1/2}$. We establish the asymptotic distribution of the SOS estimator,  which can be non-normal in general, and construct confidence intervals on top of the asymptotic distribution. Furthermore, we prove that the SOS estimator is asymptotically normal and equivalent to the full data-based estimator when $n / \sqrt{N} \to \infty$. Simulation studies and real data analyses were conducted to demonstrate the finite sample performance of the SOS estimator. Numerical results suggest that the SOS estimator is almost as computationally efficient as the uniform subsampling estimator while achieving estimation efficiency similar to the full data-based estimator.
		\vspace*{0.3em}
		
		\noindent {\it Keywords: Large-scale data; One-step estimation; Statistical efficiency; Subsampling.}
		
	\end{abstract}
	
	\doublespacing
	\section{Introduction}\label{sec1}
	
	In recent years, big data has gained significant attention across various fields, offering statisticians unprecedented opportunities to analyze vast amounts of information. However, traditional statistical methods can be computationally cumbersome or even infeasible when applied to large-scale datasets. Existing methods for processing big data can be divided into three categories: distributed methods \citep{Mcdonald2009dist,Zhang2013dist,Lee2017dist,jordan2019communication}, online update methods \citep{lin2011aggregated,schifano2016online,luo2020renewable}, and subsampling methods \citep{Drines2006Subsampling,Drines2011Subsampling,Fithian2014Subsampling,Ma2015aSubsampling,Ma2015bSubsampling,Han2020Subsampling,Wang2018Subsampling,Wang2019Subsampling,Yao2019Subsampling,Wang2020Subsampling,Yu2020Subsampling,Ai2019Subsampling,Ai2021Subsampling}.	
	Distributed computing methods address computational challenges by assigning tasks to multiple machines, computing on each machine, and aggregating results of different local machines to get a final estimate. These methods can be divided into one-shot and multi-round approaches. One-shot methods can achieve the same estimation efficiency as the full-data estimator when the number of machines is not large and may suffer from efficiency loss otherwise \citep{chen2022first}. Multi-round methods, on the other hand, relax the restriction on the machine number via iterative refinement but may incur higher communication cost \citep{fan2023communication}. Federated learning, a specialized form of distributed learning, further extends this idea by allowing model training across decentralized datasets without sharing raw data, thus addressing privacy concerns in big data analysis \citep{li2020review,zhang2021survey,li2024efficient}. Although effective, distributed computing methods generally require parallel computing infrastructure, imposing significant hardware and network requirements. Online update methods provide another alternative by continuously updating model estimates with newly available data, without storing historical information. It can effectively alleviate computational and storage challenges. However, online updating is primarily designed for streaming data scenarios where observations arrive sequentially, such as in real-time monitoring applications \citep{luo2023statistical}, and may not be suitable for large-scale batch processing \citep{mahendran2023model}.
	Compared to distributed and online updating algorithms, subsampling techniques have become prevalent in numerous real-world applications due to their computational efficiency and their ability to be implemented with minimal computational resources, e.g., a single laptop. This article mainly focuses on the subsampling strategy.

	The main issue of the subsampling strategy is that it only uses a small part of the data, making it hard to attain the estimation efficiency of the full data-based estimator \citep{Drines2006Subsampling,kushilevitz_nisan_1996}. Therefore, many works are devoted to improving the estimation efficiency of subsampling estimation. The rationale behind many existing subsampling improvement strategies is that the information contained in different observations is different, and observations with more information should be sampled with higher probability \citep{Wang2019Efficient}. Based on this idea, many improvement schemes focus on designing nonuniform sampling probability (NSP) to improve the estimation efficiency of subsampling estimators.
	\cite{Drines2006Subsampling} proposed a leverage subsampling method for linear models, which prefers to select leverage points. \cite{Drines2011Subsampling} and \cite{Ma2015aSubsampling} proposed some fast calculation algorithms to speed up the calculation of the leverage subsampling method. \cite{Fithian2014Subsampling} proposed a local case-control subsampling method for logistic regression, which prefers to keep data points that are easily misclassified. \cite{Han2020Subsampling} extended the idea in \cite{Fithian2014Subsampling} to the multi-class classification problem. These methods design subsampling probability based on some intuitive sample importance. Another line of work is the variance-based subsampling methods that design the NSP to minimize some criterion function of the resulting estimator's asymptotic variance \citep{Wang2018Subsampling}.
	This idea has been developed to deal with various problems including the multi-class logistic regression \citep{Yao2019Subsampling}, the quantile regression \citep{Ai2021Subsampling}, the quasi-likelihood \citep{Yu2020Subsampling}, and cox regression \citep{keret2023analyzing}.
	Recently, \cite{fan2022nearly} proposes an empirical likelihood weighting method that improves the estimation efficiency of the subsampling estimator by incorporating some easily available auxiliary information.
	The above estimators converge to the true parameter at the rate $n^{-1/2}$, where $n$ is the subsample size. Typically, $n$ is much smaller than the sample size of the full data $N$. Thus, these subsampling estimators still have substantial estimation efficiency loss compared to the full data-based estimator. 

	This paper proposes a new approach to mitigating the estimation efficiency loss in subsampling M-estimator. The proposed method is based on uniform subsampling which includes different observations with equal probability.
	We commence by investigating the asymptotic expansions of the uniformly subsampled and full data-based estimators, subsequently devising a {\it{subsampled one-step}} (SOS) method to improve the estimation efficiency of the uniformly subsampled estimator while maintaining the computational efficiency.  
	We prove that the SOS estimator has a fast convergence rate $\max(n^{-1},N^{-1/2})$ rather than $n^{-1/2}$. We establish the asymptotic distribution of the proposed estimator which can be non-normal in general. The confidence interval is provided for statistical inference based on the asymptotic distribution of SOS. Furthermore, we prove that the SOS estimator is asymptotically as efficient as the full data-based estimator when $n/\sqrt{N}\to \infty$. 
	Extensive numerical experiments were conducted based on both simulated and real datasets to demonstrate the finite sample performance of the proposed estimator. The proposed estimator exhibits promising performance in terms of both estimation efficiency and computational efficiency in our numerical results.
	
	It is worth noting that the one-step update is a long-standing idea that dates back at least to \cite{le1956asymptotic}. In recent years, this approach has continued to find success in a variety of statistical problems.  For instance, \citet{Kamatani2015Hybrid} leveraged it to stabilize the simultaneous maximum likelihood estimator for the estimation problem under a parametric diffusion process. \cite{Brouste2023Fast}, \cite{hariz2023fast}, and \cite{hariz2024fast} constructed one-step estimators for parameter estimation in Hawkes processes, weak fractionally autoregressive integrated moving-average models, and first-order fractional autoregressive models, respectively, to accelerate computation while preserving full efficiency. \cite{Kutoyants2016OnMM} proposed multiple-step estimators for parametric Markov processes, iteratively refining an initial estimator until it achieves the asymptotic efficiency of the full-data maximum likelihood estimator.
		These prior works primarily focus on maximum likelihood estimation or least squares estimation under parametric models. In contrast, our results showcase the power of the one-step update in the context of general M-estimation with large-scale data and differ from existing results in several key aspects. Computationally, our SOS method uses the Hessian matrix computed solely from the subsample, which improves computational efficiency with minimal loss in statistical accuracy. Theoretically, this paper contributes by establishing the asymptotic distribution of the one-step estimator in the regime $n/\sqrt{N}\nrightarrow\infty$, and providing a valid inference procedure based on the limiting distribution. This result is particularly valuable in large-scale data settings, where the full sample size $N$ is extremely large and a small subsample size $n$ is essential for computational tractability.
		This aspect distinguishes our work from earlier literature on one-step updates, such as \cite{Kutoyants2016OnMM}, which discussed the convergence rate $\max(n^{-1},N^{-1/2})$ for one-step estimators but did not derive the corresponding asymptotic distribution.
	
	The rest of this paper is organized as follows. In Section \ref{sec: method}, we introduce the SOS method. In Section \ref{sec: properties}, we establish the asymptotic properties of the SOS estimator. In Section \ref{sec: simulation}, we conduct simulation studies to evaluate the finite sample performance of the SOS estimator in terms of estimation efficiency and computation efficiency. Furthermore, we present a real data example in Section \ref{sec: real} to illustrate the practical application of the SOS method. The proofs are delayed to the supplementary material.

	\section{Methodology}\label{sec: method}
	
	Let $\{\bZ_{i}\}_{i=1}^{N}$ be independent and identically distributed (i.i.d.) observations of a random vector $\bZ$.
	Suppose the parameter of interest $\btheta_{0}\in \mathbb{R}^{d}$ is the minimizer of the population-level loss function $E\{L(\bZ;\btheta)\}$, that is,
	\begin{equation*}
		\btheta_{0} = \argmin_{\btheta\in\Theta}E\{L(\bZ;\btheta)\},
	\end{equation*}
	where $L$ is a loss function and $\Theta$ is the parameter space. A natural estimator for $\btheta_{0}$ can be obtained by minimizing the sample version of $E\{L(\bZ;\btheta)\}$.
	Specifically, the full data-based M-estimator is
	\begin{equation}\label{eq:full est}
		\hat{\btheta}_{\rm full} = \argmin_{\btheta\in\Theta}\frac{1}{N}\sum_{i=1}^{N} L(\bZ_{i};\btheta).
	\end{equation}
	Computation of the full data-based estimator $\hat{\btheta}_{\rm full}$ is time-consuming when dealing with extremely large datasets, particularly in cases where $\hat{\btheta}_{\rm full}$ does not have a closed-form solution and requires to be solved by an iterative algorithm.  The computing time of $\hat{\btheta}_{\rm full}$ is of order $O\{\xi(Nd^{2}+d^{3}+Nd)\}$ when employing the Newton-Raphson algorithm, where $\xi$ denotes the number of iterations. 
	When the sample size is large, storing the data and computing the full data-based estimator $\hat{\btheta}_{\rm full}$ can be challenging or even infeasible.
	
	Subsampling provides an effective approach to reducing the computational burden associated with large-scale data. The simplest subsampling is the uniform subsampling. The uniform subsampling method involves generating a Bernoulli random variable $R_{i}$ for each data point $\bZ_{i}$ with $P(R_{i}=1\mid \bZ_{i})=n/N$, where $n$ represents the expected subsample size. Let $S=\{i\mid R_{i}=1,i\in\{1,\dots,N\}\}$ denote the index set of the subsample. Based on the subsample indexed by $S$, the uniform subsampling M-estimator can be defined as:
	\begin{equation}\label{eq:uni est}
		\tilde{\btheta}_{\rm uni} = \argmin_{\btheta\in\Theta}\sum_{i\in S}L(\bZ_{i};\btheta).
	\end{equation}
	The uniform subsampling estimator $\tilde{\btheta}_{\rm uni}$ is computationally efficient, while it is not as accurate as the full data-based estimator $\hat{\btheta}_{\rm full}$ since only a small portion of the data is used. Extensive studies devote to improving the estimation efficiency of $\tilde{\btheta}_{\rm uni}$ \citep{Drines2006Subsampling,Yu2020Subsampling,Yao2019Subsampling,Wang2018Subsampling,Wang2019Subsampling,Wang2020Subsampling,Ai2019Subsampling}.
	However, the convergence rate of existing subsampling estimators is typically $n^{-1/2}$, which can be much slower than that of $\hat{\btheta}_{\rm full}$.	
	
	This paper aims to develop a computationally and statistically efficient estimator that can asymptotically achieve the same estimation efficiency as the full data-based estimator $\hat{\btheta}_{\rm full}$ while retaining similar computing time as $\tilde{\btheta}_{\rm uni}$.
	Note that under regularity conditions, the full data-based estimator $\tilde{\btheta}_{\rm full}$ and the uniform subsampling estimator $\tilde{\btheta}_{\rm uni}$ have the following asymptotic expansions 
	\begin{equation}\label{eq:full expan}
		\tilde{\btheta}_{\rm full}-\btheta_{0} = - \left[E\left\{\nabla^{2}L(\bZ;\btheta_{0})\right\}\right]^{-1}\frac{1}{N}\sum_{i=1}^{N}\nabla L(\bZ_{i};\btheta_{0}) + O_{P}(N^{-1})
	\end{equation}
	and 
	\begin{equation}\label{eq:uni expan}
		\tilde{\btheta}_{\rm uni}-\btheta_{0} = - \left[E\left\{\nabla^{2}L(\bZ;\btheta_{0})\right\}\right]^{-1}\frac{1}{n}\sum_{i\in S}\nabla L(\bZ_{i};\btheta_{0}) + O_{P}(n^{-1}),
	\end{equation}
	respectively.
	By comparing the expansions in \eqref{eq:full expan} and \eqref{eq:uni expan}, we have
	\begin{equation}\label{eq:difference}
		\tilde{\btheta}_{\rm full} - \tilde{\btheta}_{\rm uni}= - \left[E\left\{\nabla^{2}L(\bZ;\btheta_{0})\right\}\right]^{-1}\left\{\frac{1}{N}\sum_{i=1}^{N}\nabla L(\bZ_{i};\btheta_{0})-\frac{1}{n}\sum_{i\in S}\nabla L(\bZ_{i};\btheta_{0})\right\}+O_{P}(n^{-1}). 
	\end{equation}
	One can find that the second term at the right side of \eqref{eq:difference} is ignorable when $n/\sqrt{N}\to\infty$, i.e., the subsample size is not too small.
	The estimation efficiency loss comes mainly from the first term on the right side of \eqref{eq:difference}, which is of order $O_{P}(n^{-1/2})$. 
	
	To mitigate the estimation efficiency loss while maintaining computational efficiency, we construct an easy-to-compute estimate for the first term on the right side of \eqref{eq:difference}. Specifically, we estimate $\left[E\left\{\nabla^{2}L(\bZ;\btheta_{0})\right\}\right]^{-1}$ and $N^{-1}\sum_{i=1}^{N}\nabla L(\bZ_{i};\btheta_{0})-n^{-1}\sum_{i\in S}\nabla L(\bZ_{i};\btheta_{0})$ by $\left\{n^{-1}\sum_{i\in S}\nabla^2 L(\bZ_{i};\tilde{\btheta}_{\rm uni})\right\}^{-1}$ and
	\[
	\begin{aligned}
		&\frac{1}{N}\sum_{i=1}^{N}\nabla L(\bZ_{i};\tilde{\btheta}_{\rm uni})-\frac{1}{n}\sum_{i\in S}\nabla L(\bZ_{i};\tilde{\btheta}_{\rm uni})
		= \frac{1}{N}\sum_{i=1}^{N}\nabla L(\bZ_{i};\tilde{\btheta}_{\rm uni}),
	\end{aligned}
	\]
	respectively. Then, we define the SOS estimator as
	\begin{equation*}
		\tilde{\btheta}_{\rm SOS} = \tilde{\btheta}_{\rm uni} - \left\{\frac{1}{n}\sum_{i\in S}\nabla^2 L(\bZ_{i};\tilde{\btheta}_{\rm uni})\right\}^{-1}\frac{1}{N}\sum_{i=1}^{N}\nabla L(\bZ_{i};\tilde{\btheta}_{\rm uni}),
	\end{equation*} 
	Notice that the computation of $N^{-1}\sum_{i=1}^{N}\nabla L(\bZ_{i};\tilde{\btheta}_{\rm uni})$ only involves sample average over the full data, which is easy to calculate even for large-scale data. The time complexity for computing $N^{-1}\sum_{i=1}^{N}\nabla L(\bZ_{i};\tilde{\btheta}_{\rm uni})$ is $O(Nd)$ that is of the same order as that of computing the NSP for existing subsampling methods \citep{Wang2019Efficient, Wang2020Subsampling}.

	The form of the SOS estimator resembles the one-step update from $\tilde{\btheta}_{\rm uni}$ using Newton's method, which is given by $\tilde{\btheta}_{\rm uni}-\left\{N^{-1}\sum_{i=1}^{N}\nabla^2 L(\bZ_{i};\tilde{\btheta}_{\rm uni})\right\}^{-1}N^{-1}\sum_{i=1}^{N}\nabla L(\bZ_{i};\tilde{\btheta}_{\rm uni})$.
	The initial estimator $\tilde{\btheta}_{\rm uni}$ is consistent and can be easily calculated.
	In addition, the computation of the Hessian in the SOS estimator is based solely on the subsample which further alleviates the computational problem when dealing with large datasets.
	The one-step update in the SOS estimator enhances the estimation efficiency of the uniform subsampling estimator by incorporating more comprehensive information from the full dataset. This allows it to converge faster than many existing subsampling methods and maintain computational feasibility by avoiding full dataset iterative computations.
	To facilitate practitioners, we provide a step-by-step computation in Algorithm \ref{general}.
	
	\begin{breakablealgorithm}
		\caption{Subsampled One-Step Algorithm.}
		\begin{algorithmic}
			\State \textbf{Step 1: Subsampling.} Initialize $S=\varnothing$. For $i=1,\dots,N$, generate a Bernoulli variable $R_{i}\sim$ Bernoulli$(n/N)$, and if $R_{i} = 1$, update $S= S\cup\{i\}$.
			\State \textbf{Step 2: Initial Estimation.} Based on the subsample indexed by $S$, calculate the uniform subsampling M-estimator 
			\begin{equation*}
				\tilde{\btheta}_{\rm uni} = \argmin_{\btheta\in\Theta}\sum_{i\in S}L(\bZ_{i};\btheta).
			\end{equation*}
			For example, let $Y$ be an outcome variable, $\bX$ a vector of covariates, and $\bZ = (Y,\bX^{\T})^{\T}$. Then $L(\bZ;\btheta) = (Y-\bX^{\T}\btheta)^{2}$ under linear regression and 
			$$L(\bZ;\btheta) = -\left[Y\bX^{\T}\btheta-\log\left\{1+\exp(\bX^{\T}\btheta)\right\}\right]$$ under logistic regression.
			\State \textbf{Step 3: Compute the Hessian and Gradient.} Based on the initial estimator $\tilde{\btheta}_{\rm uni}$ in Step 2 and  the subsample indexed by $S$, calculate the Hessian 
			$$\widetilde{\bbH} = \frac{1}{n}\sum_{i\in S}\nabla^2 L(\bZ_{i};\tilde{\btheta}_{\rm uni}).$$
			Based on the initial estimator $\tilde{\btheta}_{\rm uni}$ in Step 2 and the full data, calculate the gradient $$\widetilde{\bJ}=\frac{1}{N}\sum_{i=1}^{N}\nabla L(\bZ_{i};\tilde{\btheta}_{\rm uni}).$$
			\State \textbf{Step 4: One-Step Update.} Based on the initial estimator $\tilde{\btheta}_{\rm uni}$ in Step 2 and the Hessian $\widetilde{\bbH}$ and gradient $\widetilde{\bJ}$ in Step 3, calculate the SOS estimator
			\begin{equation*}
				\tilde{\btheta}_{\rm SOS} = \tilde{\btheta}_{\rm uni} - \widetilde{\bbH}^{-1}\widetilde{\bJ}.
			\end{equation*} 
		\end{algorithmic}
		\label{general}
	\end{breakablealgorithm}

	The one-step update is a classical idea that dates back at least to \cite{le1956asymptotic}, and has led to fruitful developments across a wide range of statistical problems. For example, it is employed in semiparametric problems to construct efficient estimators \citep{bickel1993efficient} and applied to resolve the numerical difficulty in solving a complicated estimating equation \citep{Vaart2000AS}. 
	More recently, the one-step update strategy has also been leveraged to alleviate computational burdens. For instance, \citet{hariz2024fast} extended the one-step procedure to estimate parameters in first-order fractional autoregressive models. \citet{Brouste2023Fast} and \citet{hariz2023fast} constructed subsample-based one-step estimators for Hawkes processes and weak fractionally autoregressive integrated moving-average models, respectively, achieving computational gains while retaining statistical efficiency. \citet{Kutoyants2016OnMM} proposed multiple-step estimators for parametric Markov processes, iteratively refining a subsample-based initial estimator until it attained the asymptotic efficiency of the full-data estimator. Our SOS method and the above methods all achieve high computational and statistical efficiency using the one-step update idea. 	
	Nonetheless, the SOS method differs from existing research in terms of its scope of application, estimator construction, and theoretical properties. Existing methods largely focus on maximum likelihood estimation or least squares estimation under parametric models, whereas we consider a more general M-estimation setting that does not require parametric assumption on the data distribution and contains maximum likelihood estimation and least squares estimation as special cases. Additionally, the parametric assumptions adopted in the previous works imply that the Hessian matrix is a function of the parameter of interest, and an estimator of the Hessian matrix can be constructed by plugging in the initial estimator. In contrast, the SOS method estimates the Hessian matrix based on both the initial estimator and the subsample, which makes the asymptotic analysis more challenging. From the theoretical perspective, previous methods typically require that the initial estimator converges at a certain rate to guarantee the final estimator’s asymptotic distribution. In contrast, our Theorem \ref{theo: general} in Section \ref{sec: properties} establishes the asymptotic distribution of the SOS estimator without any restrictions on the convergence rate of the initial estimator, thereby offering greater flexibility in practice.

	\section{Asymptotic Properties}\label{sec: properties}
	
	In this section, we establish the asymptotic properties of the proposed estimator $\tilde{\btheta}_{\rm SOS}$. For this purpose, we consider the following regularity conditions.
	\begin{enumerate}[(C1)]
		\item There exist $G_{1}(\bz)$ and $G_{2}(\bz)$ such that $\|\nabla L(\bz;\btheta_{1})-\nabla L(\bz;\btheta_{2})\|\leq G_{1}(\bz)\|\btheta_{1}-\btheta_{2}\|$, $\|\nabla^2 L(\bz;\btheta_{1})-\nabla^2 L(\bz;\btheta_{2})\|\leq G_{2}(\bz)\|\btheta_{1}-\btheta_{2}\|$, and $\|\nabla^3 L(\bz;\btheta_{1})-\nabla^3 L(\bz;\btheta_{2})\|\leq G_{3}(\bz)\|\btheta_{1}-\btheta_{2}\|$ for any $\btheta_{1}$, $\btheta_{2}\in \Theta$ with $E[G_{k}(\bZ)^{2}] < \infty$ for $k=1,2,3$.
		\item There exists some $\tau>0$ such that $E\left\{\|\nabla L(\bZ_{i};\btheta_{0})\|^{2+\tau}\right\}<\infty$, $E\left\{
		\|\nabla^{2}L(\bZ_{i};\btheta_{0})\|^{2+\tau}\right\}<\infty$, and $E\left\{\|\nabla^{3}L(\bZ_{i};\btheta_{0})\|^2\right\}<\infty$.
		\item The minimum eigenvalue of $E\left\{\nabla^{2}L(\bZ;\btheta_{0})\right\}$ is bounded away from zero.
	\end{enumerate}

	\begin{theorem}\label{theo: general}
		Under conditions (C1)--(C3), as $n,N\to\infty$, we have
		\begin{equation*}
			\tilde{\btheta}_{\rm SOS}-\btheta_{0} = O_{P}(n^{-1}+N^{-1/2})
		\end{equation*}
		and
		\begin{equation}\label{eq: asy dist}
			\min(n, \sqrt{N})(\tilde{\btheta}_{\rm SOS}-\btheta_{0})\stackrel{d}{\to}\bg(\bU),
		\end{equation}
		where $\bU$ is a $2d+(d^2-d)/2+d$ dimensional random vector following from  $N(0,\bbV)$ with
		\begin{equation*}
			\bbV=\begin{pmatrix}
				\bbV_{11} & \bbV_{12} & \bbV_{13}\\
				\bbV_{21} & \bbV_{22} & \bbV_{23}\\
				\bbV_{31} & \bbV_{32} & \bbV_{33}
			\end{pmatrix}
			,
		\end{equation*}
		$\bbV_{11} =  E\left\{\nabla L(\bZ;\btheta_{0})\nabla L(\bZ;\btheta_{0})^{\T}\right\}$, $\bbV_{12} = \sqrt{r}\bbV_{11}$ with $r=\lim_{n,N\to \infty}n/N$, $\bbV_{13} = (1-r)E\left\{\nabla L(\bZ;\btheta_{0})\bV_{e}^{\T}\right\}$ with $\bV_{e}$ being the vectorized form of the upper triangle of $\nabla^{2}L(\bZ;\btheta_{0})$, $\bbV_{21} = \bbV_{12}^{\T}$, $\bbV_{22}=\bbV_{11},\bbV_{23} = (0)_{d\times((d^2-d)/2+d)}$, $\bbV_{31} = \bbV_{13}^{\T}$, $\bbV_{32} = \bbV_{23}^{\T}$, $\bbV_{33} = (1-r)E\left(\bV_{e}{\bV_{e}}^{\T}\right)$, and
		\begin{equation*}
			\bg(\bU) = c_{1}\bbH^{-1}(\bbM/2)\left\{(\bbH^{-1}\bU_{1})\otimes(\bbH^{-1}\bU_{1})\right\}-c_{2}\bbH^{-1}\bU_{2}-c_1\bbH^{-1}\bbU_C(\bbH^{-1}\bU_{1}),
		\end{equation*}
		$c_1 = \lim_{n,N\to \infty}\min(\sqrt{N},n)/n$, $c_{2} =  \lim_{n,N\to \infty}\min(\sqrt{N},n)/\sqrt{N}$, the notation $\otimes$ denotes Kronecker product,  $\bU_{1}$ consists of the first $d$ elements of $\bU$, $\bU_{2}$ consists of the $(d+1)$-th to the $(2d+1)$-th elements of $\bU$, and $\bbU_{C}$ is a $d\times d$ symmetrical matrix with upper triangle matrix consisted by the elements in $\bU_{3}$ arranged in rows with $\bU_{3}$ consisting of the $(2d+1)$-th to the $(2d+(d^2-d)/2+d)$-th elements of $\bU$, $\bbM=E\left\{\nabla^{3}L(\bZ;\btheta_{0})\right\}$ is a $d\times d^2$ matrix with the $j$-th row being the vectorized form of the Hessian matrix of $E\left\{\partial L(\bZ;\btheta_{0})/\partial\theta_{j}\right\}$, and
		$\bbH=E\left\{\nabla^{2}L(\bZ;\btheta_{0})\right\}$.
	\end{theorem}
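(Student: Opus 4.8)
\noindent\emph{Proof plan.} Write $\bbH=E[\nabla^{2}L(\bZ;\btheta_{0})]$, $\bbM=E[\nabla^{3}L(\bZ;\btheta_{0})]$ and $\tilde{\bdelta}=\tilde{\btheta}_{\rm uni}-\btheta_{0}$; let $\bbA_{n}=n^{-1}\sum_{i\in S}\nabla^{2}L(\bZ_{i};\tilde{\btheta}_{\rm uni})$ be the matrix appearing in $\tilde{\btheta}_{\rm SOS}$, and abbreviate $\bar g_{n}=n^{-1}\sum_{i\in S}\nabla L(\bZ_{i};\btheta_{0})$, $\bar g_{N}=N^{-1}\sum_{i=1}^{N}\nabla L(\bZ_{i};\btheta_{0})$, $\bbA^{0}_{n}=n^{-1}\sum_{i\in S}\nabla^{2}L(\bZ_{i};\btheta_{0})$, $\bbA^{0}_{N}=N^{-1}\sum_{i=1}^{N}\nabla^{2}L(\bZ_{i};\btheta_{0})$. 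The plan is: (i) refine the stochastic expansion \eqref{eq:uni expan} of $\tilde{\btheta}_{\rm uni}$ by one further order; (ii) Taylor-expand the remaining ingredients of $\tilde{\btheta}_{\rm SOS}$ about $\btheta_{0}$, keeping every stochastic term of order at least $\max\{n^{-1},N^{-1/2}\}$; and (iii) read off the limit law from a joint central limit theorem together with the continuous mapping theorem.

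\emph{Step (i).} Differentiating the criterion defining $\tilde{\btheta}_{\rm uni}$ gives $n^{-1}\sum_{i\in S}\nabla L(\bZ_{i};\tilde{\btheta}_{\rm uni})=\bzero$; a third-order Taylor expansion about $\btheta_{0}$, with remainders handled by (C1)--(C3) (in particular $n^{-1}\sum_{i\in S}\nabla^{3}L(\bZ_{i};\bar{\btheta}_{i})=\bbM+o_{P}(1)$ whenever $\bar{\btheta}_{i}\to\btheta_{0}$, from the Lipschitz bound on $\nabla^{3}L$ and $E[\|\nabla^{3}L(\bZ;\btheta_{0})\|^{2}]<\infty$), yields, in addition to $\tilde{\bdelta}=O_{P}(n^{-1/2})$,
\begin{equation*}
\tilde{\bdelta}=-\bbH^{-1}\bar g_{n}+\bbH^{-1}(\bbA^{0}_{n}-\bbH)\bbH^{-1}\bar g_{n}-\tfrac12\,\bbH^{-1}\bbM\big[(\bbH^{-1}\bar g_{n})\otimes(\bbH^{-1}\bar g_{n})\big]+o_{P}(n^{-1}).
\end{equation*}
The same reasoning gives $\bbA_{n}=\bbH+O_{P}(n^{-1/2})$, so $\bbA_{n}^{-1}=\bbH^{-1}-\bbH^{-1}(\bbA_{n}-\bbH)\bbH^{-1}+O_{P}(n^{-1})$, where $\bbA_{n}-\bbH=(\bbA^{0}_{n}-\bbH)+O_{P}(n^{-1/2})$ and the residual $O_{P}(n^{-1/2})$ term (from $\nabla^{2}L(\bZ_{i};\tilde{\btheta}_{\rm uni})-\nabla^{2}L(\bZ_{i};\btheta_{0})$) equals $\bbM$ contracted with $\tilde{\bdelta}$ up to $o_{P}(n^{-1/2})$; its $O_{P}(n^{-1})$ interaction with $\bar g_{n}$ feeds into the quadratic term below.

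\emph{Step (ii).} The key observation I would exploit is that, since $\tilde{\btheta}_{\rm uni}$ annihilates the subsample score, the factor $N^{-1}\sum_{i=1}^{N}\nabla L(\bZ_{i};\tilde{\btheta}_{\rm uni})$ in $\tilde{\btheta}_{\rm SOS}$ equals $h(\tilde{\btheta}_{\rm uni})$ with $h(\btheta):=N^{-1}\sum_{i=1}^{N}\nabla L(\bZ_{i};\btheta)-n^{-1}\sum_{i\in S}\nabla L(\bZ_{i};\btheta)$. Because the leading Hessians cancel, $\nabla h(\btheta_{0})=\bbA^{0}_{N}-\bbA^{0}_{n}=O_{P}(n^{-1/2})$, and the second derivative of $h$ at $\btheta_{0}$ is $o_{P}(1)$ by the law of large numbers for $\nabla^{3}L$; hence $h(\tilde{\btheta}_{\rm uni})=(\bar g_{N}-\bar g_{n})+(\bbA^{0}_{N}-\bbA^{0}_{n})\tilde{\bdelta}+o_{P}(n^{-1})$. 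Substituting this, the expansion of $\bbA_{n}^{-1}$, and the refined $\tilde{\bdelta}$ into $\tilde{\btheta}_{\rm SOS}-\btheta_{0}=\tilde{\bdelta}-\bbA_{n}^{-1}h(\tilde{\btheta}_{\rm uni})$, and discarding products that are $o_{P}(\max\{n^{-1},N^{-1/2}\})$ (such as $(\bbA^{0}_{N}-\bbH)\tilde{\bdelta}=O_{P}(N^{-1/2}n^{-1/2})$), the leading $-\bbH^{-1}\bar g_{n}$ of $\tilde{\bdelta}$ cancels the $+\bbH^{-1}\bar g_{n}$ coming out of $-\bbA_{n}^{-1}h(\tilde{\btheta}_{\rm uni})$, while the order-$n^{-1}$ terms (originating in $\tilde{\bdelta}$, in $\bbA_{n}^{-1}$, and in $(\bbA^{0}_{N}-\bbA^{0}_{n})\tilde{\bdelta}$) recombine to leave
\begin{equation*}
\tilde{\btheta}_{\rm SOS}-\btheta_{0}=-\bbH^{-1}\bar g_{N}+\tfrac12\,\bbH^{-1}\bbM\big[(\bbH^{-1}\bar g_{n})\otimes(\bbH^{-1}\bar g_{n})\big]-\bbH^{-1}(\bbA^{0}_{n}-\bbA^{0}_{N})\bbH^{-1}\bar g_{n}+o_{P}(\max\{n^{-1},N^{-1/2}\}),
\end{equation*}
which in particular gives $\tilde{\btheta}_{\rm SOS}-\btheta_{0}=O_{P}(n^{-1}+N^{-1/2})$. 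I expect this bookkeeping to be the main obstacle: one must pin down exactly which order-$n^{-1}$ cross terms survive, and must not replace $\bbA^{0}_{n}-\bbA^{0}_{N}$ by $\bbA^{0}_{n}-\bbH$, because the subsample and full-data Hessian averages are correlated and the subsample size $|S|$ is itself random --- exactly the features that produce the factors $1-r$ in $\bbV_{13}$ and $\bbV_{33}$.

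\emph{Step (iii).} Expressing things through the inclusion indicators $R_{i}$ (so $i\in S\Leftrightarrow R_{i}=1$, and $|S|\sim\mathrm{Binomial}(N,n/N)$, hence $|S|/n\to1$ in probability), the three surviving random quantities are, up to $o_{P}(1)$, $\sqrt n\,\bar g_{n}=n^{-1/2}\sum_{i}R_{i}\nabla L(\bZ_{i};\btheta_{0})$, $\sqrt N\,\bar g_{N}=N^{-1/2}\sum_{i}\nabla L(\bZ_{i};\btheta_{0})$, and $\sqrt n\,\mathrm{vech}(\bbA^{0}_{n}-\bbA^{0}_{N})=\sqrt n\sum_{i}(R_{i}/n-1/N)\,\mathrm{vech}\,\nabla^{2}L(\bZ_{i};\btheta_{0})$, each a normalised sum of i.i.d.\ summands satisfying the Lyapunov condition by the $(2+\tau)$-th moment bounds in (C2). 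The Lindeberg central limit theorem then gives joint convergence to $\bU\sim N(\bzero,\bbV)$, and a direct second-moment computation --- using $R_{i}\perp\bZ_{i}$ and $E[\nabla L(\bZ;\btheta_{0})]=\bzero$ --- reproduces the stated blocks (e.g.\ $\bbV_{12}=\sqrt r\,\bbV_{11}$; $\bbV_{23}=\bzero$ because $E[R_{i}/n-1/N]=0$; and the uncentred $\bbV_{33}=(1-r)E[\bV_{e}\bV_{e}^{\T}]$, the random fluctuation of $|S|$ contributing an $\bbH$-proportional component to the third block). Finally, multiplying the displayed expansion by $\min\{n,\sqrt N\}$, using $\min\{n,\sqrt N\}/n\to c_{1}$ and $\min\{n,\sqrt N\}/\sqrt N\to c_{2}$, identifying $\bU_{1},\bU_{2},\bU_{3}$ (and hence $\bbU_{C}$) with the limits of the three quantities above, and applying the continuous mapping theorem to the fixed polynomial map $\bg$ --- continuous because $\bbH^{-1}$ exists by (C3) --- yields $\min\{n,\sqrt N\}(\tilde{\btheta}_{\rm SOS}-\btheta_{0})\stackrel{d}{\to}\bg(\bU)$.
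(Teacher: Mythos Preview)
Your proposal is correct and follows the same overall architecture as the paper's proof: a Taylor expansion producing the three surviving stochastic pieces $\bar g_{N}$, $\bar g_{n}$, and $\bbA^{0}_{n}-\bbA^{0}_{N}$, followed by a Lindeberg--Feller CLT for the stacked vector (verified via the $(2+\tau)$th moment bound in (C2)) and the continuous mapping theorem applied to the polynomial $\bg$. Your final expansion of $\tilde{\btheta}_{\rm SOS}-\btheta_{0}$ and the covariance computations for $\bbV$ coincide with the paper's.

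The organization of the Taylor-expansion step differs, however, and the paper's route is more economical. You first refine the expansion of $\tilde{\btheta}_{\rm uni}-\btheta_{0}$ to second order (your Step~(i)), then expand $\bbA_{n}^{-1}$ and $h(\tilde{\btheta}_{\rm uni})$ separately, and must track several order-$n^{-1}$ cross terms until they cancel. The paper instead writes $\tilde{\bdelta}=\bbA_{n}^{-1}\bbA_{n}\tilde{\bdelta}$ so that $\tilde{\btheta}_{\rm SOS}-\btheta_{0}=\bbA_{n}^{-1}\bigl[\bbA_{n}\tilde{\bdelta}-N^{-1}\sum_{i}\nabla L(\bZ_{i};\tilde{\btheta}_{\rm uni})\bigr]$; Taylor-expanding $\bbA_{n}\tilde{\bdelta}$ (one order in $\nabla^{2}L$) and the full-data score (two orders in $\nabla L$) about $\btheta_{0}$ yields the decomposition directly, with $\bbA_{n}^{-1}$ replaced by $\bbH^{-1}$ at the end. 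This needs only the \emph{first}-order expansion $\tilde{\bdelta}=-\bbH^{-1}\bar g_{n}+o_{P}(n^{-1/2})$ (the paper's Lemma~S1), so your Step~(i) and the associated cancellation bookkeeping in Step~(ii) can be avoided. What your approach buys is a slightly more explicit accounting of where the coefficient $1/2$ on the $\bbM$ term arises; what the paper's buys is fewer intermediate terms to control.
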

	Theorem \ref{theo: general} shows that the convergence rate of $\tilde{\btheta}_{\rm SOS}$ is $\max\{n^{-1}, N^{-1/2}\}$, significantly faster than the existing subsampling estimators whose convergence rates are typically of order $n^{-1/2}$ \citep{Wang2018Subsampling, Wang2020Subsampling, Yu2020Subsampling}. In addition, Theorem \ref{theo: general} establishes the asymptotic distribution of $\tilde{\btheta}_{\rm SOS}$. 
	Although the SOS estimator is motivated by the asymptotic expansion under the assumption that $n/\sqrt{N} \to \infty$, its asymptotic distribution is established without requiring any condition on the ratio between $n$ and $\sqrt{N}$. In contrast, the one-step procedures in previous works such as \cite{Kutoyants2016OnMM}, \cite{Brouste2023Fast}, \cite{hariz2023fast}, and \cite{wang2024fitting} assume $n \gg \sqrt{N}$ to derive asymptotic distributions for their one-step estimators. However, when $N$ is extremely large and computational resources are limited, processing a subsample of size $n\gg \sqrt{N}$ can be impractical. 
	The multi-step procedure in \cite{Kutoyants2016OnMM} can alleviate the requirement but still imposes conditions on the ratio between $n$ and $\sqrt{N}$. For example, their two-step estimator requires $n \gg N^{1/4}$, which, while less restrictive, increases computational costs due to the additional step.
	Our results facilitate inference without order requirement on the subsample size $n$, providing a practical solution for situations with limited computational resources.

	The confidence interval for the proposed SOS estimator can be constructed based on the Monte Carlo approximation of the asymptotic distribution in Theorem \ref{theo: general} as follows:
	\begin{enumerate}[ Step 1.]
		\item Estimate $\bbV$ by $\widetilde{\bbV}$ which replaces the expectation in $\bbV$ by the corresponding sample form based on the subsample indexed by $S$ and replaces $\btheta_{0}$ by $\tilde{\btheta}_{\rm SOS}$;
		\item Generate $n$ random samples $\{\widetilde{\bU}_{1},\dots,\widetilde{\bU}_{n}\}$ from the normal distribution with mean $\bzero$ and variance $\widetilde{\bbV}$;
		\item Estimate $\bbM$ and $\bbH$ in Theorem \ref{theo: general} by $\widetilde{\bbM}$ and $\widetilde{\bbH}$, respectively, which replaces the expectation in $\bbM$ and $\bbH$ by their sample forms based on the subsample indexed by $S$ and replaces $\btheta_{0}$ in $\bbM$ and $\bbH$  by $\tilde{\btheta}_{\rm SOS}$;
		\item Estimate $\bg(\cdot)$ by $\tilde{\bg}(\cdot)$ which replaces $\bbM$ and $\bbH$ in $\bg(\cdot)$ by $\widetilde{\bbM}$ and $\widetilde{\bbH}$, respectively, and calculate $\tilde{\bg}(\widetilde{\bU}_{i})$ for $i=1,\dots,n$;
		\item Determine the lower and upper $\alpha/2$ quantiles of $\{\tilde{g}_{j}(\widetilde{\bU}_{i})\}_{i = 1}^{n}$, denoted by $\tilde{g}_{{\rm L},j}$ and $\tilde{g}_{{\rm U}, j}$ respectively, where $\tilde{g}_{j}(\widetilde{\bU}_{i})$ is the $j$-th component of $\tilde{\bg}(\widetilde{\bU}_{i})$ for $i = 1,\dots, n$ and $j=1,\dots,d$;
		\item For $j = 1,\dots, d$, construct the confidence interval for $\btheta_{0,j}$ with confidence level $1-\alpha$ as $[\tilde{\btheta}_{\rm SOS}-\tilde{g}_{{\rm U},j}/\min(\sqrt{N},n),\tilde{\btheta}_{\rm SOS}-\tilde{g}_{{\rm L},j}/\min(\sqrt{N},n)]$.
	\end{enumerate}
	See Section \ref{sec: simulation} for the numerical performance of the proposed confidence interval. 
	The asymptotic distribution in \eqref{eq: asy dist} is complicated and requires to be approximated by the Monte Carlo approximation. When the subsample size is not too small compared to the full data size in the sense that  $n/\sqrt{N}\to \infty$, $c_{1}=0$ and $c_{2}=1$ in Theorem \ref{theo: general} and the asymptotic distribution can be greatly simplified. In this case,
	$\tilde{\btheta}_{\rm SOS}$ is asymptotically equivalent to the full data-based estimator $\hat{\btheta}_{\rm full}$. We summarize the result in the following corollary.
	
	\begin{corollary}\label{corr: special}
		Under conditions (C1)--(C3), as $n,N\to\infty$ and $n/\sqrt{N}\to \infty$, we have
		\begin{equation*}
			\sqrt{N}(\tilde{\btheta}_{\rm SOS}-\hat{\btheta}_{\rm full})\stackrel{p}{\to}0
		\end{equation*}
		and
		\begin{equation*}
			\sqrt{N}(\tilde{\btheta}_{\rm SOS}-\btheta_{0})\stackrel{d}{\to}N(0,\bbV_{c}),
		\end{equation*}
		where $\bbV_{c}=\bbH^{-1}\bbV_{11}\bbH^{-1}$ with $\bbH$ and $\bbV_{11}$ being defined in Theorem \ref{theo: general}.
	\end{corollary}

	The asymptotic result in Corollary \ref{corr: special} is analogous to those in \citet{Kutoyants2016OnMM},  \cite{Brouste2023Fast}, and \citet{hariz2023fast} but applies to a different problem. Specifically, they focus on specific parametric models with possibly correlated data. In contrast, Corollary \ref{corr: special} is established under a general M-estimation framework with i.i.d. data.

	\section{Numerical Experiments}\label{sec: simulation}
	
	In this section, we conduct numerical experiments to evaluate the finite sample performance of the proposed method. For comparison, we also calculate several widely studied subsampling estimators in existing literature, including the uniform subsampling estimator (UNI) defined in \eqref{eq:uni est}, the inverse probability weighted subsampling estimator (IPW) proposed by \cite{Wang2018Subsampling} which employs the optimal nonuniform subsampling probability (\emph{opt}NSP) that minimizes the asymptotic mean squared error of the IPW estimator, the \emph{opt}NSP-based maximum sampled conditional likelihood subsampling estimator (MSCL) proposed by \cite{Wang2020LikeliEfficient}, and the \emph{opt}NSP-based empirical likelihood weighting subsampling estimator (ELW) proposed by \cite{fan2022nearly}. As a reference, we also calculate the full data-based estimator (FULL) defined in \eqref{eq:full est}.
	
	In the simulation,  we generate a set of i.i.d. observations $\bZ_i = (\bX_i, Y_i)$, $i=1,\dots,N$, where $\bX_{i}=(X_{i1},\dots,X_{i9})$ is a $9$-dimensional covariate vector, and $X_{ij},j=1,\dots,9$ are independently generated from $U(-1,1)$. 
	We consider two regression models: logistic regression and Weibull regression. In the logistic regression model, the response variable $Y_i$ follows a Bernoulli distribution with mean  $\exp(\gamma_{0}+\bX_{i}^{\T}\bbeta_{0})/\{1+\exp(\gamma_{0}+\bX_{i}^{\T}\bbeta_{0})\}$, where $\gamma_{0}=0$ and $\bbeta_{0}$ is a $9$-dimensional vector with all elements set to $0.2$. The parameter of interest is $\btheta_{0}=(\gamma_{0},\bbeta_{0}^{\T})^{\T}$. In the Weibull regression model, the response variable $Y_i$ is generated as $Y_i = W_{i}\exp(\gamma_{0}+\bX_{i}^{\T}\bbeta_{0}/\alpha_{0})$, where $W_{i}$ follows a Weibull distribution with a shape parameter $\alpha_{0}=0.5$ and scale parameter $1$. The parameter of interest is $\btheta_{0}=(\alpha_{0},\gamma_{0},\bbeta_{0}^{\T})^{\T}$.
	For each of the models, we fix the full sample size $N=10^{6}$ and vary the subsample size $n$ to be $5\times 10^{3}$, $10^{4}$, $2\times10^{4}$ and $5\times10^{4}$. In addition, to calculate the \emph{opt}NSP, we generate a pilot sample of size $200$.
	To evaluate the performance of an estimator $\bar{\btheta}$ , we calculate its empirical
	${\rm RMSE}=\left(K^{-1}\sum_{k=1}^{K}\|\bar{\btheta}^{k}-\btheta_{0}\|^2\right)^{1/2}$
	based on $K=1000$ repetitions, where $\bar{\btheta}^{k}$ is the estimate from the $k$-th run.
	Figure \ref{fig: RMSE vary sub} plots the RMSE of the above estimators under the logistic and Weibull models for different subsample sizes. Due to computational complexity, we do not provide results for the MSCL estimator under the Weibull model, as the integral operation involved in this estimator is difficult to express in closed form.
	
	\begin{figure}[htpb]
		\centering
		\includegraphics[scale=0.35]{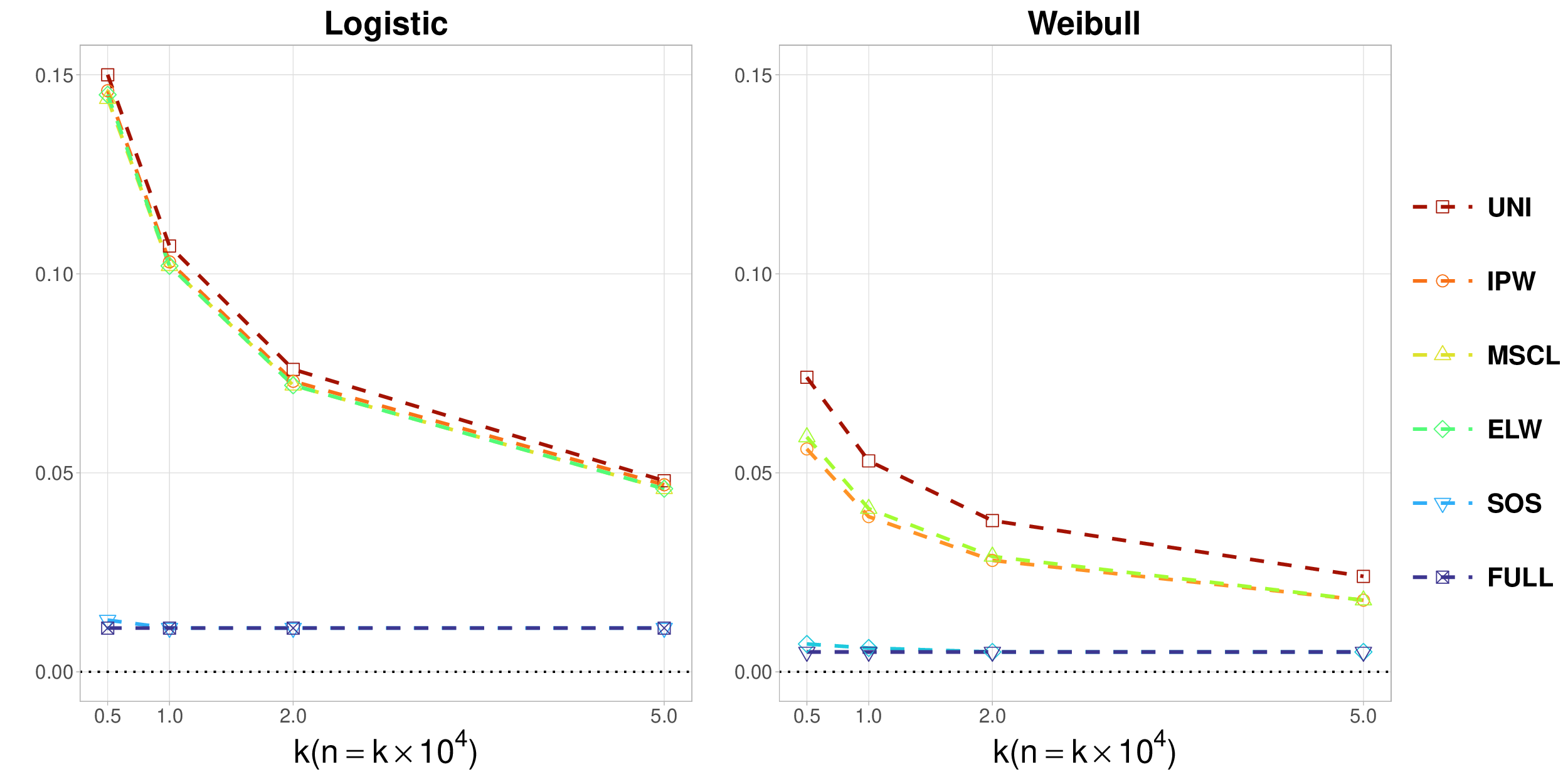}
		\caption{RMSE of different estimators under the logistic and Weibull models with $N=10^{6}$ and $n=k\times10^{4}$.}
		\label{fig: RMSE vary sub}
	\end{figure}
	
	Figure \ref{fig: RMSE vary sub} illustrates that UNI exhibits a large RMSE, indicating lower estimation efficiency. The IPW, MSCL, and ELW estimators under the logistic model, as well as the IPW and ELW estimators under the Weibull model, achieve lower RMSE than UNI, with a particularly notable improvement in estimation efficiency under the Weibull model.
	Nevertheless, despite these improvements, existing subsampling estimators still suffer from substantial estimation efficiency loss compared to the full data-based estimator.
	In contrast, the proposed SOS estimator outperforms other subsampling estimators in terms of RMSE and even performs similarly to the full data-based estimator under both models.
	This phenomenon aligns with our theoretical results in Theorem \ref{theo: general} and Corollary \ref{corr: special}.
	The SOS estimator achieves a faster convergence rate than existing subsampling estimators and even has the same estimation efficiency as the full data-based estimator when the subsample size is not too small compared to the full data size.
	
	To further evaluate these estimators, we present the bias and the Monte Carlo standard deviation (SD) of the above estimators for each dimension under the logistic and Weibull models in Tables \ref{tab: simul logistic vary sub} and \ref{tab: simul Weibull vary sub}, respectively.
	The bias and SD of the full data-based estimator do not change as the subsample size varies and hence we present the results here.
	The bias and SD of the full data-based estimator under the logistic model are (0.000, -0.003,  0.000, -0.001,  0.000,  0.001,  0.001,  0.000, -0.001,  0.002) and
	(0.020, 0.035, 0.035, 0.036, 0.035, 0.037, 0.034, 0.036, 0.036, 0.036), respectively.
	The bias and SD of the full data-based estimator under the Weibull model are (0.000,  0.000,  0.001,  0.000,  0.000,  0.000, -0.001, -0.001,  0.000, -0.001, -0.001) and (0.004, 0.010, 0.017, 0.017, 0.018, 0.017, 0.018, 0.017, 0.017, 0.018, 0.017), respectively.

	\begin{table}[htbp]
		\begin{center}
			\caption{Bias and SD of different estimators for the $j$-th dimension of $\btheta_{0}$ under the logistic model with $N = 10^{6}$ and $n=k\times 10^{4}$. The numbers in the table are the simulation results multiplied by $10$}{	
				\resizebox{.8\columnwidth}{!}{
					{\begin{tabular}{ccccccccccccccc}
							\toprule
							\multirow{2}{15pt}{$k$} &\multirow{2}{15pt}{{$j$}} 
							&\multicolumn{2}{c}{UNI}& \multicolumn{2}{c}{IPW}& \multicolumn{2}{c}{MSCL}&\multicolumn{2}{c}{ELW}&\multicolumn{2}{c}{SOS}\\
							\cmidrule(lr){3-4}  \cmidrule(lr){5-6} \cmidrule(lr){7-8}\cmidrule(lr){9-10}\cmidrule(lr){11-12}
							& & Bias &SD  &Bias &SD  &Bias &SD  &Bias &SD&Bias &SD\\
							\midrule
							\multirow{10}{15pt}{0.5}	
							&1   &0.009 &0.292 &-0.008 &0.276 &-0.010 &0.273  &0.001 &0.057  &0.000 &0.025\\
							&2   &0.030 &0.489 &-0.019 &0.477 &-0.015 &0.466 &-0.017 &0.478 &-0.006 &0.043\\
							&3   &0.011 &0.489 &-0.016 &0.500 &-0.010 &0.494 &-0.002 &0.485 &-0.005 &0.043\\
							&4   &0.026 &0.481 &-0.012 &0.474 &-0.006 &0.468  &0.019 &0.475 &-0.006 &0.041\\
							&5   &0.008 &0.498 &-0.007 &0.477 &-0.006 &0.471  &0.005 &0.484 &-0.006 &0.042\\
							&6  &-0.005 &0.478  &0.013 &0.488  &0.017 &0.479  &0.017 &0.499 &-0.003 &0.044\\
							&7   &0.005 &0.504 &-0.011 &0.475 &-0.010 &0.470  &0.006 &0.504 &-0.003 &0.041\\
							&8  &-0.001 &0.505 &-0.011 &0.451 &-0.009 &0.444  &0.005 &0.472 &-0.005 &0.044\\
							&9   &0.028 &0.480 &-0.031 &0.479 &-0.032 &0.474 &-0.007 &0.465 &-0.005 &0.042\\
							&10  &0.019 &0.500  &0.002 &0.468 &-0.002 &0.463  &0.037 &0.479 &-0.003 &0.044\\
							\vspace{-5pt}
							\multirow{12}{15pt}{1} \\		
							&1   &0.003 &0.204 &-0.004 &0.192 &-0.004 &0.189  &0.001 &0.043  &0.000 &0.021\\
							&2   &0.011 &0.345 &-0.001 &0.339 &-0.002 &0.332 &-0.008 &0.333 &-0.005 &0.037\\
							&3   &0.002 &0.345 &-0.014 &0.346 &-0.009 &0.339 &-0.009 &0.344 &-0.002 &0.037\\
							&4   &0.011 &0.355 &-0.002 &0.339  &0.000 &0.337  &0.015 &0.348 &-0.004 &0.037\\
							&5   &0.003 &0.344 &-0.010 &0.336 &-0.011 &0.333 &-0.001 &0.335 &-0.003 &0.037\\
							&6   &0.013 &0.356 &-0.003 &0.347 &-0.001 &0.343  &0.004 &0.338 &-0.001 &0.039\\
							&7  &-0.003 &0.350 &-0.004 &0.324 &-0.006 &0.322  &0.007 &0.349 &-0.001 &0.036\\
							&8   &0.007 &0.360 &-0.007 &0.341 &-0.005 &0.336 &-0.008 &0.341 &-0.002 &0.038\\
							&9   &0.011 &0.341 &-0.013 &0.342 &-0.014 &0.335  &0.001 &0.348 &-0.004 &0.038\\
							&10  &0.008 &0.359 &-0.013 &0.331 &-0.015 &0.325  &0.017 &0.331  &0.000 &0.039\\
							\vspace{-5pt}
							\multirow{12}{15pt}{2} \\		
							&1  &-0.002 &0.147  &0.001 &0.142  &0.002 &0.140  &0.000 &0.032  &0.000 &0.020\\
							&2  &-0.008 &0.252  &0.005 &0.231  &0.005 &0.227 &-0.011 &0.235 &-0.004 &0.036\\
							&3   &0.005 &0.243 &-0.006 &0.241 &-0.004 &0.238  &0.006 &0.233 &-0.001 &0.035\\
							&4   &0.009 &0.250 &-0.007 &0.241 &-0.005 &0.237  &0.011 &0.244 &-0.002 &0.036\\
							&5   &0.003 &0.246  &0.001 &0.240  &0.001 &0.239 &-0.010 &0.239 &-0.001 &0.035\\
							&6   &0.007 &0.254  &0.008 &0.236  &0.009 &0.233  &0.004 &0.233  &0.000 &0.038\\
							&7  &-0.007 &0.253  &0.009 &0.247  &0.008 &0.242  &0.002 &0.252  &0.000 &0.035\\
							&8  &-0.009 &0.250 &-0.009 &0.242 &-0.006 &0.239  &0.000 &0.236 &-0.001 &0.036\\
							&9  &-0.001 &0.247 &-0.008 &0.240 &-0.008 &0.236  &0.010 &0.239 &-0.002 &0.037\\
							&10  &0.004 &0.251 &-0.021 &0.234 &-0.021 &0.230  &0.021 &0.241  &0.001 &0.037\\
							\vspace{-5pt}
							\multirow{12}{15pt}{5} \\		
							&1   &0.001 &0.090  &0.002 &0.089  &0.002 &0.088  &0.000 &0.025  &0.000 &0.020\\
							&2  &-0.006 &0.160 &-0.003 &0.150 &-0.003 &0.148  &0.007 &0.149 &-0.003 &0.035\\
							&3   &0.006 &0.148  &0.003 &0.156  &0.004 &0.154  &0.008 &0.150  &0.000 &0.035\\
							&4  &-0.001 &0.162 &-0.007 &0.154 &-0.006 &0.152  &0.000 &0.146 &-0.002 &0.036\\
							&5  &-0.003 &0.160 &-0.001 &0.151 &-0.001 &0.150 &-0.006 &0.155 &-0.001 &0.035\\
							&6   &0.000 &0.161  &0.004 &0.146  &0.004 &0.144  &0.001 &0.150  &0.000 &0.037\\
							&7  &-0.002 &0.162  &0.001 &0.152  &0.001 &0.149  &0.002 &0.156  &0.000 &0.034\\
							&8   &0.002 &0.156  &0.000 &0.155  &0.000 &0.153 &-0.001 &0.154 &-0.001 &0.036\\
							&9  &-0.001 &0.155 &-0.006 &0.150 &-0.008 &0.149  &0.003 &0.151 &-0.001 &0.036\\
							&10  &0.006 &0.160 &-0.013 &0.156 &-0.012 &0.154  &0.009 &0.154  &0.001 &0.036\\
							\bottomrule
			\end{tabular}}}}
		\label{tab: simul logistic vary sub}
	\end{center}
\end{table}

\begin{table}[htbp]
\begin{center}
	\caption{Bias and SD of different estimators for the $j$-th dimension of $\btheta_{0}$ under the Weibull model with $N = 10^{6}$ and $n=k\times 10^{4}$. The numbers in the table are the simulation results multiplied by $10$}{	
		\resizebox{.7\columnwidth}{!}{
			{\begin{tabular}{ccccccccccccccc}
					\toprule
					\multirow{2}{15pt}{$k$} &\multirow{2}{15pt}{{$j$}} 
					&\multicolumn{2}{c}{UNI}& \multicolumn{2}{c}{IPW}&\multicolumn{2}{c}{ELW}&\multicolumn{2}{c}{SOS}\\
					\cmidrule(lr){3-4}  \cmidrule(lr){5-6} \cmidrule(lr){7-8}\cmidrule(lr){9-10}
					& & Bias &SD  &Bias &SD  &Bias &SD &Bias &SD\\
					\midrule
					\multirow{11}{15pt}{0.5}	
					&1   &0.007 &0.055  &0.000 &0.057  &0.002 &0.041 &-0.007 &0.005\\
					&2  &-0.001 &0.153  &0.000 &0.123 &-0.001 &0.111 &-0.004 &0.012\\
					&3  &-0.003 &0.241  &0.008 &0.180  &0.011 &0.186 &-0.002 &0.024\\
					&4   &0.003 &0.244  &0.003 &0.184  &0.002 &0.190 &-0.002 &0.024\\
					&5   &0.001 &0.237  &0.007 &0.180 &-0.007 &0.194 &-0.003 &0.024\\
					&6   &0.005 &0.248  &0.005 &0.184 &-0.004 &0.191 &-0.002 &0.023\\
					&7  &-0.006 &0.239 &-0.001 &0.185  &0.000 &0.192 &-0.003 &0.024\\
					&8   &0.004 &0.237 &-0.006 &0.181  &0.004 &0.195 &-0.004 &0.023\\
					&9   &0.012 &0.245 &-0.005 &0.179  &0.006 &0.197 &-0.002 &0.025\\
					&10 &-0.005 &0.243  &0.008 &0.185  &0.005 &0.193 &-0.002 &0.023\\
					&11  &0.007 &0.238  &0.008 &0.173 &-0.010 &0.188 &-0.003 &0.024\\
					\vspace{-5pt}
					\multirow{12}{15pt}{1} \\		
					&1   &0.003 &0.040 &-0.001 &0.040  &0.001 &0.029 &-0.004 &0.004\\
					&2   &0.000 &0.106  &0.002 &0.088  &0.000 &0.078 &-0.002 &0.011\\
					&3  &-0.002 &0.173  &0.005 &0.126  &0.004 &0.131  &0.000 &0.019\\
					&4   &0.002 &0.175  &0.003 &0.130  &0.003 &0.134 &-0.002 &0.019\\
					&5   &0.000 &0.175 &-0.002 &0.128 &-0.008 &0.133 &-0.001 &0.020\\
					&6   &0.002 &0.172  &0.000 &0.129  &0.000 &0.131 &-0.002 &0.018\\
					&7  &-0.010 &0.173 &-0.002 &0.127 &-0.004 &0.138 &-0.002 &0.019\\
					&8   &0.001 &0.166  &0.000 &0.125  &0.005 &0.134 &-0.002 &0.019\\
					&9   &0.005 &0.177  &0.000 &0.128  &0.003 &0.135 &-0.001 &0.019\\
					&10 &-0.001 &0.176  &0.008 &0.130  &0.000 &0.131 &-0.002 &0.019\\
					&11  &0.004 &0.175  &0.012 &0.122 &-0.006 &0.130 &-0.002 &0.020\\
					\vspace{-5pt}
					\multirow{12}{15pt}{2} \\		
					&1   &0.001 &0.028 &-0.001 &0.029  &0.000 &0.020 &-0.002 &0.004\\
					&2   &0.002 &0.077  &0.002 &0.063  &0.001 &0.053 &-0.001 &0.011\\
					&3   &0.008 &0.124  &0.003 &0.087  &0.002 &0.094  &0.000 &0.018\\
					&4  &-0.001 &0.121  &0.002 &0.093 &-0.001 &0.096 &-0.001 &0.018\\
					&5  &-0.004 &0.125  &0.000 &0.091 &-0.005 &0.094  &0.000 &0.018\\
					&6   &0.001 &0.123  &0.000 &0.091 &-0.001 &0.093 &-0.001 &0.017\\
					&7  &-0.010 &0.127 &-0.003 &0.087 &-0.004 &0.096 &-0.001 &0.018\\
					&8   &0.001 &0.123 &-0.002 &0.091  &0.005 &0.096 &-0.001 &0.018\\
					&9   &0.010 &0.120  &0.001 &0.094  &0.003 &0.092  &0.000 &0.018\\
					&10  &0.001 &0.124  &0.003 &0.091 &-0.003 &0.095 &-0.001 &0.018\\
					&11  &0.001 &0.125  &0.007 &0.089 &-0.001 &0.095 &-0.001 &0.018\\
					\vspace{-5pt}
					\multirow{12}{15pt}{5} \\		
					&1   &0.000 &0.017  &0.000 &0.018  &0.000 &0.013 &-0.001 &0.004\\
					&2   &0.000 &0.047  &0.000 &0.041  &0.000 &0.034 &-0.001 &0.010\\
					&3   &0.004 &0.079 &-0.001 &0.055  &0.000 &0.061  &0.001 &0.017\\
					&4  &-0.002 &0.076  &0.004 &0.058 &-0.004 &0.061  &0.000 &0.017\\
					&5   &0.000 &0.080 &-0.001 &0.059  &0.002 &0.060  &0.000 &0.018\\
					&6   &0.002 &0.077  &0.000 &0.055  &0.002 &0.058 &-0.001 &0.017\\
					&7  &-0.005 &0.082 &-0.002 &0.057  &0.000 &0.061 &-0.001 &0.018\\
					&8   &0.002 &0.080 &-0.002 &0.056 &-0.002 &0.059 &-0.001 &0.017\\
					&9   &0.004 &0.077  &0.000 &0.057 &-0.001 &0.057  &0.000 &0.017\\
					&10 &-0.002 &0.077  &0.000 &0.058  &0.001 &0.059 &-0.001 &0.018\\
					&11 &-0.001 &0.078  &0.002 &0.057  &0.003 &0.062 &-0.001 &0.017\\
					\bottomrule
	\end{tabular}}}}
\label{tab: simul Weibull vary sub}
\end{center}
\end{table}

	Table \ref{tab: simul logistic vary sub} shows that all subsampling estimators under the logistic model exhibit slight bias, and the bias is similar to that of the full data-based estimator.
	However, the SD of the UNI, IPW, MSCL, and ELW estimators is considerably larger than that of the full data-based estimator. In contrast, the proposed SOS estimator demonstrates smaller SDs than other subsampling estimators for all subsample sizes and approaches the SD of the full data-based estimator when the subsample size $n$ is not too small.
	These findings provide empirical evidence for the asymptotic equivalence between the proposed estimator and the full data-based estimator, as established in Corollary \ref{corr: special}.
	Moreover, we observe that the SD of the IPW, MSCL, and ELW estimators can be larger than that of the UNI subsampling estimator for certain components of the parameter, e.g., the third component under the logistic model. This is because the IPW, MSCL, and ELW estimators are calculated based on nonuniform subsampling probabilities that minimize the trace of the resulting estimators' asymptotic variances. Consequently, improvement for each component of the parameter cannot be guaranteed. In contrast, the SOS estimator consistently exhibits significantly smaller SD than the UNI subsampling estimator for each component of the parameter of interest.
	Similar observations can be made under the Weibull model, as shown in Table \ref{tab: simul Weibull vary sub}.

	Furthermore, we investigate the effect of the full sample size $N$ on the performance of the proposed estimator. We fix the subsample size $n=5\times 10^{4}$ and vary the full sample size $N$ to be $10^{6}$, $2\times10^{6}$, $5\times10^{6}$ and $10^{7}$.
	Figure \ref{fig: RMSE vary full} plots the RMSE of the above estimators under the logistic and Weibull models for different full data sizes.

	\begin{figure}[htpb]
		\centering
		\includegraphics[scale=0.35]{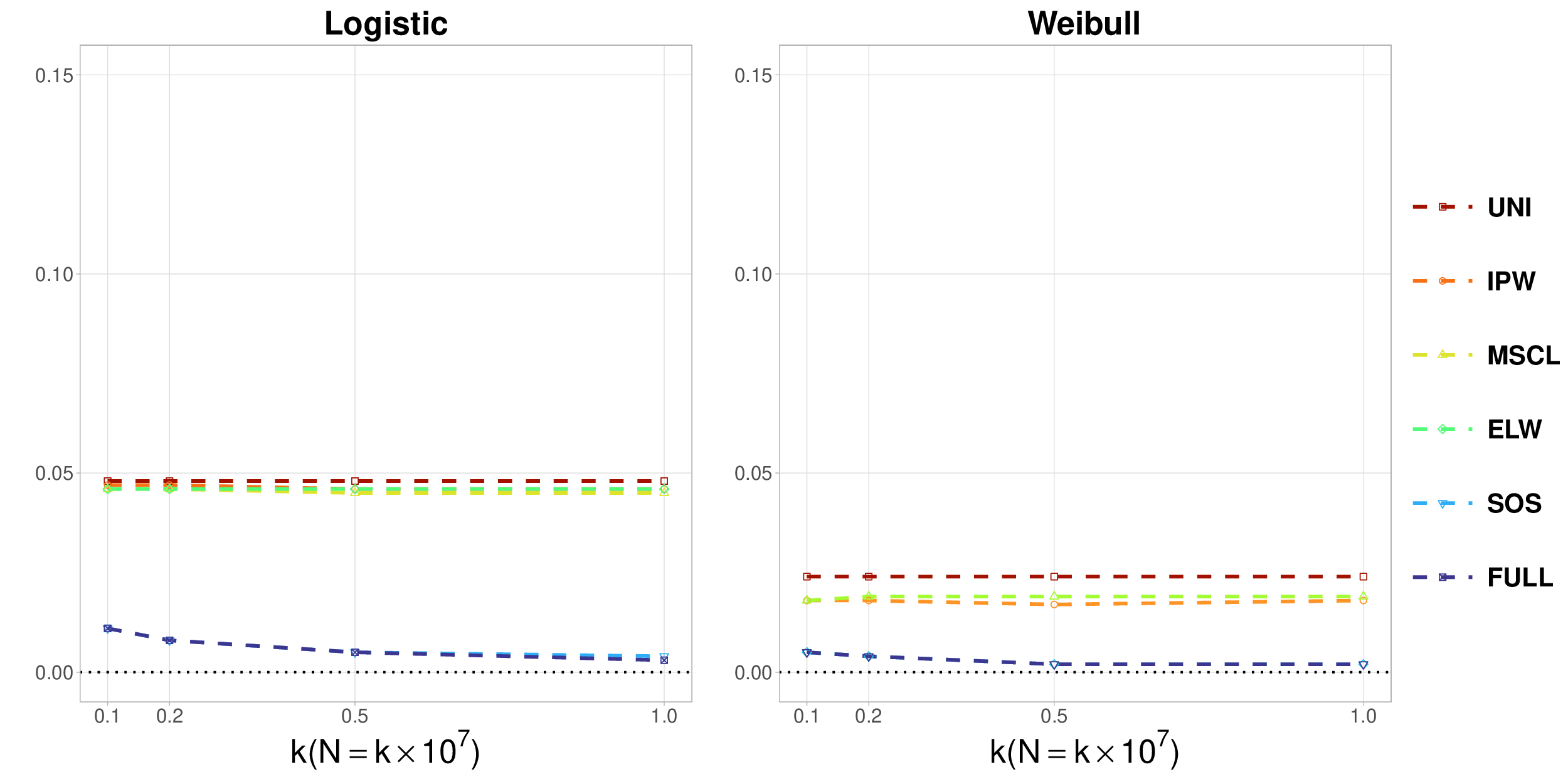}
		\caption{RMSE of different estimators under the logistic and Weibull models with $N=k\times 10^{7}$ and $n=5\times10^{4}$.}
		\label{fig: RMSE vary full}
	\end{figure}
	
	Figure \ref{fig: RMSE vary full} demonstrates that the proposed SOS estimator performs comparably to the full data-based estimator, with the RMSE decreasing as the full data size increases. On the other hand, the RMSE of the UNI, IPW, MSCL, and ELW estimators shows little change as the full data size increases and remains significantly larger than the RMSE of the SOS estimator. 
	
	We also present the bias and SD of the above estimators for different full data sizes under the logistic and Weibull models in Tables \ref{tab: simul logistic vary full} and \ref{tab: simul weibull vary full}, respectively.

	\begin{table}[htbp]
		\begin{center}
			\caption{Bias and SD of different estimators for the $j$-th dimension of $\btheta_{0}$ under the logistic model with $N = k\times 10^{7}$ and $n=5\times 10^{4}$. The numbers in the table are the simulation results multiplied by $10$}{	
				\resizebox{1\columnwidth}{!}{
					{\begin{tabular}{ccccccccccccccc}
							\toprule
							\multirow{2}{15pt}{$k$} &\multirow{2}{15pt}{{$j$}} 
							&\multicolumn{2}{c}{FULL}&\multicolumn{2}{c}{UNI}& \multicolumn{2}{c}{IPW}& \multicolumn{2}{c}{MSCL}&\multicolumn{2}{c}{ELW}&\multicolumn{2}{c}{SOS}\\
							\cmidrule(lr){3-4}  \cmidrule(lr){5-6} \cmidrule(lr){7-8}\cmidrule(lr){9-10}\cmidrule(lr){11-12}\cmidrule(lr){13-14}
							& & Bias &SD& Bias &SD  &Bias &SD  &Bias &SD  &Bias &SD&Bias &SD\\
							\midrule
							\multirow{10}{15pt}{0.1}	
							&1   &0.000 &0.020  &0.001 &0.090  &0.002 &0.089  &0.002 &0.088  &0.000 &0.025  &0.000 &0.020\\
							&2  &-0.003 &0.035 &-0.006 &0.160 &-0.003 &0.150 &-0.003 &0.148  &0.007 &0.149 &-0.003 &0.035\\
							&3   &0.000 &0.035  &0.006 &0.148  &0.003 &0.156  &0.004 &0.154  &0.008 &0.150  &0.000 &0.035\\
							&4  &-0.001 &0.036 &-0.001 &0.162 &-0.007 &0.154 &-0.006 &0.152  &0.000 &0.146 &-0.002 &0.036\\
							&5   &0.000 &0.035 &-0.003 &0.160 &-0.001 &0.151 &-0.001 &0.150 &-0.006 &0.155 &-0.001 &0.035\\
							&6   &0.001 &0.037  &0.000 &0.161  &0.004 &0.146  &0.004 &0.144  &0.001 &0.150  &0.000 &0.037\\
							&7   &0.001 &0.034 &-0.002 &0.162  &0.001 &0.152  &0.001 &0.149  &0.002 &0.156  &0.000 &0.034\\
							&8   &0.000 &0.036  &0.002 &0.156  &0.000 &0.155  &0.000 &0.153 &-0.001 &0.154 &-0.001 &0.036\\
							&9  &-0.001 &0.036 &-0.001 &0.155 &-0.006 &0.150 &-0.008 &0.149  &0.003 &0.151 &-0.001 &0.036\\
							&10  &0.002 &0.036  &0.006 &0.160 &-0.013 &0.156 &-0.012 &0.154  &0.009 &0.154  &0.001 &0.036\\
							\vspace{-5pt}
							\multirow{12}{15pt}{0.2} \\		
							&1   &0.000 &0.014  &0.002 &0.089  &0.004 &0.089  &0.003 &0.087  &0.000 &0.021  &0.000 &0.014\\
							&2   &0.000 &0.024  &0.006 &0.160 &-0.006 &0.162 &-0.005 &0.159 &-0.001 &0.153 &-0.001 &0.024\\
							&3  &-0.001 &0.025  &0.003 &0.154  &0.001 &0.152  &0.001 &0.149  &0.001 &0.151 &-0.001 &0.025\\
							&4   &0.000 &0.025  &0.001 &0.163  &0.006 &0.153  &0.005 &0.151 &-0.006 &0.156 &-0.001 &0.025\\
							&5  &-0.001 &0.024  &0.004 &0.158  &0.000 &0.153 &-0.001 &0.151 &-0.010 &0.153 &-0.001 &0.024\\
							&6   &0.000 &0.025  &0.002 &0.154  &0.006 &0.152  &0.005 &0.151 &-0.002 &0.152  &0.000 &0.025\\
							&7   &0.000 &0.024  &0.000 &0.162  &0.002 &0.152  &0.004 &0.151 &-0.011 &0.151 &-0.001 &0.024\\
							&8   &0.000 &0.025  &0.003 &0.165 &-0.003 &0.151 &-0.003 &0.149  &0.000 &0.157 &-0.001 &0.025\\
							&9   &0.000 &0.025 &-0.004 &0.157 &-0.002 &0.153 &-0.002 &0.150  &0.000 &0.155  &0.000 &0.025\\
							&10  &0.000 &0.025  &0.001 &0.151  &0.010 &0.146  &0.010 &0.144 &-0.004 &0.150 &-0.001 &0.025\\
							\vspace{-5pt}
							\multirow{12}{15pt}{0.5} \\		
							&1  &0.000 &0.009  &0.000 &0.091  &0.000 &0.091  &0.001 &0.089  &0.000 &0.018  &0.000 &0.009\\
							&2  &0.001 &0.015  &0.007 &0.159 &-0.003 &0.148 &-0.003 &0.147 &-0.001 &0.151  &0.001 &0.015\\
							&3  &0.001 &0.016  &0.005 &0.156 &-0.001 &0.150 &-0.001 &0.147 &-0.002 &0.158  &0.000 &0.016\\
							&4  &0.000 &0.016  &0.001 &0.160  &0.004 &0.155  &0.005 &0.152  &0.001 &0.155  &0.000 &0.016\\
							&5  &0.000 &0.015  &0.005 &0.155  &0.002 &0.147  &0.001 &0.146 &-0.002 &0.156 &-0.001 &0.016\\
							&6  &0.000 &0.016 &-0.001 &0.158 &-0.001 &0.149  &0.000 &0.147  &0.002 &0.153  &0.000 &0.016\\
							&7  &0.000 &0.017  &0.000 &0.158  &0.002 &0.143  &0.002 &0.142  &0.007 &0.153 &-0.001 &0.017\\
							&8  &0.000 &0.016 &-0.010 &0.158  &0.005 &0.145  &0.006 &0.142  &0.007 &0.149 &-0.001 &0.016\\
							&9  &0.001 &0.016  &0.008 &0.154  &0.003 &0.152  &0.002 &0.150  &0.002 &0.153  &0.000 &0.016\\
							&10 &0.000 &0.015  &0.003 &0.158  &0.008 &0.155  &0.008 &0.153  &0.005 &0.154  &0.000 &0.016\\
							\vspace{-5pt}
							\multirow{12}{15pt}{1} \\		
							&1  &-0.001 &0.006  &0.002 &0.093  &0.001 &0.088  &0.000 &0.086  &0.000 &0.017  &0.000 &0.007\\
							&2   &0.001 &0.010  &0.001 &0.157  &0.005 &0.143  &0.005 &0.141 &-0.002 &0.151  &0.000 &0.011\\
							&3   &0.002 &0.011  &0.003 &0.159 &-0.003 &0.153 &-0.004 &0.149 &-0.002 &0.158  &0.000 &0.012\\
							&4  &-0.001 &0.012  &0.001 &0.155 &-0.004 &0.151 &-0.003 &0.147 &-0.006 &0.156  &0.000 &0.012\\
							&5   &0.000 &0.011  &0.011 &0.156 &-0.004 &0.155 &-0.003 &0.154  &0.004 &0.157 &-0.001 &0.012\\
							&6   &0.000 &0.011 &-0.002 &0.160 &-0.005 &0.145 &-0.005 &0.144  &0.000 &0.150 &-0.001 &0.011\\
							&7  &-0.001 &0.011  &0.009 &0.154  &0.002 &0.148  &0.003 &0.147  &0.005 &0.155  &0.000 &0.012\\
							&8   &0.001 &0.011  &0.003 &0.159 &-0.004 &0.153 &-0.002 &0.150 &-0.006 &0.159  &0.000 &0.011\\
							&9   &0.001 &0.010  &0.006 &0.157  &0.004 &0.147  &0.004 &0.145  &0.000 &0.149  &0.000 &0.011\\
							&10  &0.000 &0.010  &0.003 &0.162 &-0.004 &0.151 &-0.004 &0.149  &0.001 &0.155  &0.000 &0.011\\
							\bottomrule
			\end{tabular}}}}
		\label{tab: simul logistic vary full}
	\end{center}
\end{table}

\begin{table}[htbp]
\begin{center}
	\caption{Bias and SD of different estimators for the $j$-th dimension of $\btheta_{0}$ under the Weibull model with $N = k\times 10^{7}$ and $n=5\times 10^{4}$. The numbers in the table are the simulation results multiplied by $10$}{	
		\resizebox{.9\columnwidth}{!}{
			{\begin{tabular}{ccccccccccccccc}
					\toprule
					\multirow{2}{15pt}{$k$} &\multirow{2}{15pt}{{$j$}} 
					&\multicolumn{2}{c}{FULL}&\multicolumn{2}{c}{UNI}& \multicolumn{2}{c}{IPW}&\multicolumn{2}{c}{ELW}&\multicolumn{2}{c}{SOS}\\
					\cmidrule(lr){3-4}  \cmidrule(lr){5-6} \cmidrule(lr){7-8}\cmidrule(lr){9-10}\cmidrule(lr){11-12}
					& & {Bias} &{SD}  &{Bias} &{SD}  &{Bias} &{SD} &{Bias}&{SD}  &Bias &SD\\
					\midrule
					\multirow{11}{15pt}{0.1}	
					&1   &0.000 &0.004  &0.000 &0.017  &0.000 &0.018  &0.000 &0.013 &-0.001 &0.004\\
					&2   &0.000 &0.010  &0.000 &0.047  &0.000 &0.041  &0.000 &0.034 &-0.001 &0.010\\
					&3   &0.001 &0.017  &0.004 &0.079 &-0.001 &0.055  &0.000 &0.061  &0.001 &0.017\\
					&4   &0.000 &0.017 &-0.002 &0.076  &0.004 &0.058 &-0.004 &0.061  &0.000 &0.017\\
					&5   &0.000 &0.018  &0.000 &0.080 &-0.001 &0.059  &0.002 &0.060  &0.000 &0.018\\
					&6   &0.000 &0.017  &0.002 &0.077  &0.000 &0.055  &0.002 &0.058 &-0.001 &0.017\\
					&7  &-0.001 &0.018 &-0.005 &0.082 &-0.002 &0.057  &0.000 &0.061 &-0.001 &0.018\\
					&8  &-0.001 &0.017  &0.002 &0.080 &-0.002 &0.056 &-0.002 &0.059 &-0.001 &0.017\\
					&9   &0.000 &0.017  &0.004 &0.077  &0.000 &0.057 &-0.001 &0.057  &0.000 &0.017\\
					&10 &-0.001 &0.018 &-0.002 &0.077  &0.000 &0.058  &0.001 &0.059 &-0.001 &0.018\\
					&11 &-0.001 &0.017 &-0.001 &0.078  &0.002 &0.057  &0.003 &0.062 &-0.001 &0.017\\
					\vspace{-5pt}
					\multirow{12}{15pt}{0.2} \\		
					&1  &0.000 &0.003  &0.001 &0.018  &0.001 &0.017  &0.000 &0.013 &-0.001 &0.003\\
					&2  &0.000 &0.008  &0.000 &0.047 &-0.002 &0.040  &0.001 &0.034 &-0.001 &0.008\\
					&3  &0.000 &0.012 &-0.003 &0.076  &0.003 &0.059  &0.004 &0.061  &0.000 &0.012\\
					&4  &0.001 &0.013  &0.001 &0.077  &0.001 &0.058 &-0.002 &0.059  &0.000 &0.013\\
					&5  &0.000 &0.012  &0.002 &0.079 &-0.002 &0.060 &-0.001 &0.061  &0.000 &0.012\\
					&6  &0.000 &0.012 &-0.002 &0.078  &0.000 &0.055 &-0.001 &0.061  &0.000 &0.012\\
					&7  &0.000 &0.013  &0.001 &0.078 &-0.001 &0.059 &-0.002 &0.063  &0.000 &0.013\\
					&8  &0.001 &0.012  &0.002 &0.079 &-0.001 &0.059  &0.001 &0.059  &0.000 &0.012\\
					&9  &0.000 &0.012  &0.001 &0.078  &0.000 &0.057  &0.000 &0.061  &0.000 &0.012\\
					&10 &0.000 &0.012  &0.003 &0.077  &0.001 &0.056  &0.000 &0.059 &-0.001 &0.012\\
					&11 &0.000 &0.012  &0.002 &0.075 &-0.002 &0.057  &0.002 &0.060  &0.000 &0.012\\
					\vspace{-5pt}
					\multirow{12}{15pt}{0.5} \\		
					&1   &0.000 &0.002  &0.002 &0.017  &0.001 &0.018  &0.001 &0.013 &-0.001 &0.002\\
					&2   &0.000 &0.005  &0.001 &0.048 &-0.001 &0.041  &0.000 &0.034 &-0.001 &0.005\\
					&3  &-0.001 &0.008 &-0.002 &0.075  &0.002 &0.057  &0.002 &0.061 &-0.001 &0.008\\
					&4   &0.000 &0.008 &-0.001 &0.080  &0.000 &0.057  &0.002 &0.060  &0.000 &0.008\\
					&5   &0.000 &0.008 &-0.002 &0.079 &-0.001 &0.056  &0.003 &0.060  &0.000 &0.008\\
					&6   &0.000 &0.008  &0.001 &0.077  &0.002 &0.056 &-0.001 &0.062  &0.000 &0.008\\
					&7   &0.000 &0.008  &0.000 &0.077  &0.003 &0.056  &0.002 &0.060  &0.000 &0.008\\
					&8   &0.000 &0.008  &0.002 &0.077 &-0.001 &0.056 &-0.002 &0.061  &0.000 &0.008\\
					&9   &0.000 &0.008  &0.004 &0.079  &0.001 &0.054 &-0.001 &0.059  &0.000 &0.008\\
					&10  &0.000 &0.007 &-0.003 &0.077  &0.004 &0.055 &-0.002 &0.060  &0.000 &0.008\\
					&11  &0.000 &0.008 &-0.003 &0.079  &0.001 &0.056  &0.000 &0.062  &0.000 &0.008\\
					\vspace{-5pt}
					\multirow{12}{15pt}{1} \\		
					&1   &0.000 &0.001  &0.001 &0.017  &0.001 &0.018 &0.000 &0.013 &-0.001 &0.001\\
					&2   &0.000 &0.003 &-0.001 &0.048 &-0.001 &0.039 &0.001 &0.035  &0.000 &0.003\\
					&3  &-0.001 &0.005 &-0.003 &0.079 &-0.003 &0.054 &0.001 &0.061  &0.000 &0.006\\
					&4  &-0.001 &0.006  &0.000 &0.078  &0.002 &0.057 &0.000 &0.063 &-0.001 &0.006\\
					&5   &0.000 &0.005 &-0.001 &0.079  &0.004 &0.059 &0.000 &0.060  &0.000 &0.006\\
					&6   &0.000 &0.006 &-0.004 &0.077  &0.004 &0.057 &0.002 &0.061  &0.000 &0.006\\
					&7   &0.000 &0.006  &0.004 &0.077  &0.004 &0.057 &0.001 &0.059  &0.000 &0.006\\
					&8   &0.001 &0.005 &-0.003 &0.076  &0.000 &0.058 &0.001 &0.060  &0.000 &0.006\\
					&9  &-0.001 &0.005 &-0.002 &0.080  &0.001 &0.060 &0.000 &0.061  &0.000 &0.006\\
					&10 &-0.001 &0.005 &-0.001 &0.077  &0.001 &0.057 &0.001 &0.059 &-0.001 &0.006\\
					&11  &0.000 &0.005  &0.002 &0.079  &0.003 &0.057 &0.000 &0.060  &0.000 &0.006\\
					\bottomrule
	\end{tabular}}}}
\label{tab: simul weibull vary full}
\end{center}
\end{table}
	
	Table \ref{tab: simul logistic vary full} shows that the bias of all above estimators is similar and small for all full data sizes under the logistic model. The SD of the proposed estimator is similar to that of the full data-based estimator and decreases as the full data size $N$ increases. 
	In addition, The SD of the UNI, IPW, MSCL, and ELW subsampling estimators is much larger than that of the proposed estimator, and their SD hardly changes as $N$ increases. 
	Under the Weibull model, similar observations as those under the logistic model can be observed from Table \ref{tab: simul weibull vary full}.
	
	Moving on to the inference performance of the SOS estimator, we calculated the coverage probability of the constructed confidence intervals based on the procedures behind Theorem \ref{theo: general} at a  $95\%$ confidence level. The results for different subsample sizes $n$ under the logistic and Weibull models are presented in Table \ref{tab: simul vary sub cp}.

	\begin{table}[htbp]
		\begin{center}
			\caption{The proposed estimator based coverage probability for the $j$-th dimension $\theta_{j}$ of $\btheta_{0}$ under the logistic and Weibull models with $N = 10^{6}$ and $n=k\times 10^{4}$.}{	
				\resizebox{1\columnwidth}{!}{
					{\begin{tabular}{cccccccccccccccc}
							\toprule
							&$k$& $\theta_{1}$ &$\theta_{2}$  &$\theta_{3}$ &$\theta_{4}$ &$\theta_{5}$ &$\theta_{6}$ &$\theta_{7}$ &$\theta_{8}$ &$\theta_{9}$ &$\theta_{10}$\\
							\midrule
							\multirow{4}{*}{Logistic}
							&0.5 &0.945 &0.951 &0.942 &0.952 &0.942 &0.948 &0.952 &0.939 &0.956 &0.939\\
							&1 &0.949 &0.955 &0.945 &0.952 &0.951 &0.947 &0.954 &0.939 &0.948 &0.945\\
							&2 &0.949 &0.946 &0.949 &0.944 &0.956 &0.943 &0.961 &0.943 &0.953 &0.952\\
							&5 &0.951 &0.958 &0.952 &0.940 &0.951 &0.946 &0.958 &0.944 &0.953 &0.952\\
							\midrule
							&$k$& $\theta_{1}$ &$\theta_{2}$  &$\theta_{3}$ &$\theta_{4}$ &$\theta_{5}$ &$\theta_{6}$ &$\theta_{7}$ &$\theta_{8}$ &$\theta_{9}$ &$\theta_{10}$&$\theta_{11}$\\
							\midrule
							\multirow{4}{*}{Weibull}
							&0.5&0.948 &0.955 &0.951 &0.940 &0.946 &0.948 &0.949 &0.952 &0.936 &0.958 &0.947\\
							&1&0.950 &0.954 &0.952 &0.950 &0.946 &0.955 &0.959 &0.949 &0.946 &0.956 &0.948\\
							&2&0.954 &0.952 &0.955 &0.947 &0.944 &0.957 &0.941 &0.955 &0.944 &0.949 &0.953\\
							&5&0.955 &0.953 &0.956 &0.942 &0.945 &0.958 &0.949 &0.957 &0.944 &0.947 &0.949\\
							\bottomrule
			\end{tabular}}}}
		\label{tab: simul vary sub cp}
	\end{center}
\end{table}
	
	From Table \ref{tab: simul vary sub cp}, we observe that the proposed SOS estimator achieves satisfactory coverage probabilities close to the nominal level of $95\%$ for both the logistic and Weibull models, regardless of the subsample size $n$. This indicates that the proposed method provides reliable inference for the parameter of interest.
	
	Next, we evaluate the computational performance of the proposed method.
	All the above estimators are calculated in the R programming \citep{Rcore2016} on a Windows server with a 26-core processor and 128GB RAM. The computing times of these estimators under the logistic and Weibull models for different $n$ and $N$ are presented in Table \ref{tab: time vary sub} and \ref{tab: time vary full} respectively.

	\begin{table}[htbp]
		\begin{center}
			\caption{CPU times (second) of different estimators under the logistic and Weibull models with $N = 10^{6}$ and $n=k\times10^{4}$.}
			\resizebox{.6\columnwidth}{!}{
				{\begin{tabular}{cccccccccccccccc}
						\toprule
						&$k$&UNI&IPW&MSCL&ELW&SOS\\
						\midrule
						\multirow{4}{*}{Logistic}
						&0.5 &0.251   &  0.714   &   1.067   &   0.938    &    0.451\\
						&1   &0.404   &  0.857    &  1.544       &     1.038      &0.612\\
						&2   &0.648    &  1.136    &  2.5342      &     1.305  & 0.856\\
						&5   &1.455   &   2.042    &   5.445     &       2.112   &   1.657\\
						\vspace{-8pt}\\
						\multicolumn{10}{l}{FULL:  27.920 seconds}\\
						\midrule
						&$k$&UNI&IPW&ELW&SOS\\
						\midrule
						\multirow{4}{*}{Weibull}
						&0.5 & 0.857  &   1.827    &  1.869    &   1.325\\
						&1   &1.700  &    2.908     & 2.608    &  2.168\\
						&2   &3.341  &    4.984  &     4.263   &     3.810\\
						&5  &8.529  &   11.015  &     9.216    &  9.035\\
						\vspace{-8pt}\\
						\multicolumn{10}{l}{FULL: 180.625 seconds}\\
						\bottomrule
			\end{tabular}}}
		\label{tab: time vary sub}
	\end{center}
\end{table}

\begin{table}[htbp]
\begin{center}
	\caption{CPU times (second) of different estimators under the logistic and Weibull models with $N = k\times10^{7}$ and $n=5\times10^{4}$.}
	\resizebox{.7\columnwidth}{!}{
		{\begin{tabular}{ccccccccccccccc}
				\toprule
				&$k$&FULL&UNI&IPW&MSCL&ELW&SOS\\
				\midrule
				\multirow{4}{*}{Logistic}
				&0.1 &27.920 &1.455   &   2.042    &   5.445   &   2.112  &  1.657\\
				&0.2 &57.440 &1.534 &  2.636  &  6.135   &  2.793  &  1.927\\
				&0.5 &135.240 &1.785 &   4.013  &  7.461    &  4.675 &   2.849\\
				&1   &281.020 &2.076 & 6.461 & 10.004  &   7.977 &  4.167\\
				\midrule
				&$k$&FULL&UNI&IPW&ELW&SOS\\
				\midrule
				\multirow{4}{*}{Weibull}
				&0.1 &180.625 &8.529  &   11.015  &     9.216    &  9.035\\
				&0.2 &374.510 &8.506 & 12.270   &10.424  &   9.492\\
				&0.5 &809.915&8.834  &  14.771 & 13.249&   11.270\\
				&1   &1682.105& 9.214  &  18.869 &  18.247 &  14.007\\
				\bottomrule
	\end{tabular}}}
\label{tab: time vary full}
\end{center}
\end{table}
	
	Tables \ref{tab: time vary sub} and \ref{tab: time vary full} show that the computing times of the subsampling estimators are significantly shorter compared to the full data-based estimator, particularly evident under the Weibull model and for large full data sizes. This highlights the effectiveness of subsampling methods in reducing the computational burden associated with large-scale datasets.
	Besides, the computing time of the SOS estimator is comparable to that of the UNI estimator, with the relative difference diminishing as $n$ increases. This showcases the computational efficiency of the proposed method.

	The SOS method significantly reduces the computational burden while maintaining estimation efficiency compared to the full data-based estimator,  emerging as a recommended approach to tackle challenges posed by large-scale data in practical applications.

	\section{Real data example}\label{sec: real}	
	
	In this subsection, we apply the proposed method to a large-scale airline dataset, which is available at \emph{http://stat-computing.org/dataexpo/2009/the-data.html}. The dataset encompasses flight arrival and departure details for all commercial flights within the United States from October 1987 through April 2008. The full data size is $N=116,212,331$.	
	To analyze the impact of main factors on airline delays, we employ a logistic regression model based on the dataset. We use the arrival delay indicator (1: if the arrival delay is fifteen minutes or more; 0 otherwise) as the response variable. The covariates include day/night, distance from departure to destination, weekend/weekday, and the departure delay indicator (1: if the departure delay is fifteen minutes or more; 0 otherwise).
	The full data-based estimate of the regression parameter is $(-2.824, -0.033,  1.337,  0.241,  4.127)$. The CPU time of calculating the full data-based estimator is $3762.37$ seconds.
	We take the estimation results based on the full data as a reference and then apply the proposed SOS method.
	We also calculate the UNI, IPW, and ELW estimators for comparison.
	We vary the subsample size $n$ to be $5\times 10^{3}$, $10^{4}$, $2\times 10^{4}$, $5\times 10^{4}$, $10^{5}$, $2\times 10^{5}$, $5\times 10^{5}$, and $10^{6}$.
	The results calculated based on 200 replicates are presented in Tables \ref{tab: real airline} and \ref{tab: time real}.
	
	\begin{table}[htbp]
		\begin{center}
			\caption{Bias and SD of different estimators for the $j$-th dimension of $\btheta_{0}$ under the logistic model with $N=116,212,331$ and $n=k\times 10^{4}$ in the airline data. The numbers in the table are the simulation results multiplied by $10$}{	
				\resizebox{.8\columnwidth}{!}{
					{\begin{tabular}{ccccccccccccccc}
							\toprule
							\multirow{2}{15pt}{$k$} &\multirow{2}{15pt}{{$j$}} 
							&\multicolumn{2}{c}{UNI}& \multicolumn{2}{c}{IPW}&\multicolumn{2}{c}{ELW}&\multicolumn{2}{c}{SOS}\\
							\cmidrule(lr){3-4}  \cmidrule(lr){5-6} \cmidrule(lr){7-8}\cmidrule(lr){9-10}\cmidrule(lr){11-12}
							& & Bias &SD& Bias &SD  &Bias &SD   &Bias &SD\\
							\midrule
							\multirow{5}{15pt}{0.5}	
							&1 &-0.0602 &1.4680  &0.0236 &0.7528  &0.0260 &0.8266  &0.0646 &0.0998\\
							&2 &-0.0496 &1.0277  &0.0471 &0.6160  &0.0308 &0.7389 &-0.0148 &0.0645\\
							&3  &0.0548 &5.1390 &-0.1938 &2.4830 &-0.3565 &2.5894  &0.0681 &0.3987\\
							&4 &-0.0211 &1.1199 &-0.0156 &0.6895 &-0.0004 &0.7244 &-0.0185 &0.0814\\
							&5  &0.1109 &1.2766 &-0.0566 &0.6795  &0.0420 &0.4778 &-0.1266 &0.1146\\
							\vspace{-5pt}
							\multirow{5}{15pt}{1} \\		
							&1  &0.0298 &1.0206  &0.0152 &0.5370  &0.0089 &0.5746  &0.0306 &0.0463\\
							&2 &-0.0446 &0.6887  &0.0191 &0.4444  &0.0239 &0.4888 &-0.0080 &0.0288\\
							&3  &0.0941 &3.3017 &-0.1449 &1.7176 &-0.3184 &1.7805  &0.0271 &0.1576\\
							&4 &-0.0290 &0.7697  &0.0363 &0.4831  &0.0114 &0.5007 &-0.0101 &0.0353\\
							&5  &0.0250 &0.9125 &-0.0795 &0.4590  &0.0531 &0.3358 &-0.0561 &0.0484\\
							\vspace{-5pt}
							\multirow{5}{15pt}{2} \\		
							&1 &-0.0380 &0.6893  &0.0126 &0.3915  &0.0300 &0.4034  &0.0146 &0.0231\\
							&2 &-0.0107 &0.4633  &0.0072 &0.2917 &-0.0066 &0.3726 &-0.0035 &0.0152\\
							&3  &0.2978 &2.2621 &-0.0917 &1.2679 &-0.2105 &1.1932  &0.0152 &0.0777\\
							&4 &-0.0173 &0.5446  &0.0238 &0.3484 &-0.0083 &0.3453 &-0.0052 &0.0173\\
							&5  &0.0283 &0.6527 &-0.0521 &0.3672  &0.0451 &0.2296 &-0.0278 &0.0243\\
							\vspace{-5pt}
							\multirow{5}{15pt}{5} \\		
							&1 &-0.0202 &0.4385  &0.0118 &0.2403  &0.0246 &0.2332  &0.0055 &0.0073\\
							&2 &-0.0206 &0.3279 &-0.0014 &0.1995 &-0.0190 &0.2218 &-0.0012 &0.0056\\
							&3  &0.2302 &1.4427 &-0.0049 &0.7691 &-0.0148 &0.7202  &0.0075 &0.0312\\
							&4 &-0.0017 &0.3373 &-0.0064 &0.1912 &-0.0232 &0.2360 &-0.0019 &0.0067\\
							&5 &-0.0123 &0.4007 &-0.0246 &0.2157  &0.0146 &0.1453 &-0.0109 &0.0083\\
							\vspace{-5pt}
							\multirow{5}{15pt}{10} \\	
							&1&-0.0047&0.3170&-0.0105&0.1757 &0.0117&0.1650 &0.0025&0.0045\\
							&2&-0.0044&0.2267 &0.0097&0.1388&-0.0030&0.1432&-0.0004&0.0030\\
							&3 &0.1051 &1.0185 &0.0167&0.5143&-0.0328&0.5225 &0.0039&0.0154\\
							&4&-0.0021&0.2478 &0.0031&0.1405&-0.0146&0.1807&-0.0009&0.0034\\
							&5&-0.0103&0.2788&-0.0127&0.1501 &0.0145&0.1084&-0.0053&0.0045\\
							\vspace{-5pt}
							\multirow{5}{15pt}{20} \\		
							&1 &0.0091&0.2293 &0.0086&0.1377&-0.0018&0.1284 &0.0015&0.0021\\
							&2&-0.0018&0.1675 &0.0019&0.1036 &0.0037&0.1081&-0.0003&0.0014\\
							&3 &0.0723&0.7294&-0.0017&0.3478&-0.0427&0.3679 &0.0021&0.0081\\
							&4 &0.0020&0.1791&-0.0110&0.1034 &0.0045&0.1278&-0.0006&0.0017\\
							&5&-0.0285&0.1830&-0.0176&0.1039 &0.0092&0.0738&-0.0027&0.0022\\
							\vspace{-5pt}
							\multirow{5}{15pt}{50} \\		
							&1  &0.0014 &0.1490  &0.0065 &0.0826  &0.0018 &0.0813  &0.0006 &0.0010\\
							&2  &0.0014 &0.1115  &0.0002 &0.0616  &0.0016 &0.0649 &-0.0001 &0.0007\\
							&3  &0.0357 &0.4570  &0.0128 &0.2054 &-0.0102 &0.2108  &0.0011 &0.0036\\
							&4 &-0.0017 &0.1217 &-0.0065 &0.0689 &-0.0037 &0.0783 &-0.0002 &0.0007\\
							&5 &-0.0183 &0.1191 &-0.0055 &0.0689  &0.0067 &0.0492 &-0.0012 &0.0009 \\
							\vspace{-5pt}
							\multirow{5}{15pt}{100} \\		
							&1  &0.0023 &0.0947 &-0.0011 &0.0577  &0.0023 &0.0567  &0.0004 &0.0005\\
							&2  &0.0051 &0.0784  &0.0023 &0.0430  &0.0013 &0.0488 &-0.0001 &0.0003\\
							&3  &0.0134 &0.2990  &0.0110 &0.1592 &-0.0123 &0.1624  &0.0002 &0.0015\\
							&4 &-0.0072 &0.0829  &0.0001 &0.0478 &-0.0024 &0.0524 &-0.0004 &0.0003\\
							&5 &-0.0018 &0.0807 &-0.0065 &0.0493  &0.0021 &0.0312 &-0.0006 &0.0004\\
							\bottomrule
			\end{tabular}}}}
		\label{tab: real airline}
	\end{center}
\end{table}
	
	\begin{table}[htbp]
		\begin{center}
			\caption{CPU times (second)  of different estimators under the logistic model with $N=116,212,331$ and $n=k\times10^{4}$ in the airline data.}
			\resizebox{.4\columnwidth}{!}{
				{\begin{tabular}{ccccccccccccccc}
						\toprule
						$k$&UNI&IPW&ELW&SOS\\
						\midrule
						0.5& 8.139 &   42.906   &       52.366  & 24.565\\
						1&   8.398 &   43.544    &      52.758  &  24.770\\
						2&   8.535  &  43.246    &      52.729  &  25.030\\
						5&   9.261  &   44.570   &       53.432 &  25.489\\
						10&  10.556  &   47.950    &       54.470   &  26.100\\
						20&  12.981  &  49.765     &     57.054  & 28.404\\
						50&  20.855  &  58.559     &     65.837  & 36.635\\
						100&  34.617  &   73.310    &      80.237 &  50.351\\
						\bottomrule
			\end{tabular}}}
		\label{tab: time real}
	\end{center}
\end{table}

	Table \ref{tab: real airline} provides insights into the performance of different estimators on the airline dataset. The bias of all subsampling estimators is small, particularly for large subsample sizes. The standard deviation of the IPW and ELW estimators is smaller than that of the UNI estimator, indicating the effectiveness of designing NSP and incorporating sample moments in improving the estimation efficiency of the UNI method.
	
	However, there is still considerable efficiency loss in these estimators. In contrast, the proposed SOS method outperforms other subsampling estimators in terms of standard deviation. The SD of the proposed estimator is significantly smaller than that of the IPW and ELW estimators for each component of the parameter of interest, across all subsample sizes. Moreover, the standard deviation decreases at a faster rate as the subsample size increases, indicating the superior efficiency of the proposed method.
	
	Table \ref{tab: time real} highlights the computational performance of the estimators. Subsampling methods substantially reduce the computing time compared to the full data-based estimator, with the time of $3762.37$ seconds reduced to less than one minute. The UNI estimator is the fastest among these subsampling estimators. The computing time of the SOS estimator is approximately half the time of other subsampling methods.	
	Considering both estimation efficiency and computation efficiency, the results from the numerical studies and real data analyses consistently demonstrate the significant advantages of the proposed SOS method in handling large-scale data.
	
	Next, we discuss the selection of $n$. If computational resources are highly constrained, the best we can do is to use a small $n$ that fits the available resources, and in this case, our method remains effective according to Theorem \ref{theo: general}. Otherwise, the subsample size $n$ should be chosen to strike a balance between computational and statistical efficiency. 
	The computational cost of the SOS estimator primarily arises from two aspects: computing the initial estimator and calculating the full data sample mean (i.e., the gradient in the one-step update).
	When $n$ is small, the computing time is largely dominated by the gradient computation, which is independent of $n$. In this case, the computing time of the SOS estimator has no obvious change as $n$ increases.
	When $n$ is large enough such that the computing time of the initial estimator becomes comparable to that of the full data gradient, the computing time of the SOS estimator increases notably with $n$, especially for complex models, as shown in our simulations under Weibull model (see Table \ref{tab: time vary sub}).
	On the statistical side, both our numerical results (Tables \ref{tab: simul logistic vary sub} and \ref{tab: simul Weibull vary sub}) and Corollary \ref{corr: special} suggest that the estimation efficiency of the SOS estimator improves with increasing $n$ until $n/\sqrt{N}$ becomes sufficiently large for the SOS estimator to achieve the same estimation efficiency as the full data-based estimator. After that, the efficiency gain brought by increasing $n$ is limited. This motivates us to set $n=c\sqrt{N}$ for some constant $c$. 
	The computational and statistical performance of the SOS method under different values of $c$ depends on the specific form of the function $L$ in \eqref{eq:full est} and the data generating process.
	Practitioners can select $c$ based on their experiences and domain knowledge.
	Our numerical results suggest that $c=5$ provides a good balance between computational and estimation efficiency across various scenarios. Therefore, we recommend $n=5\sqrt{N}$ as the default choice for implementing the SOS method.

	\section{Discussion}
	
	This paper proposes the SOS estimator, a simple yet useful subsampling estimator for fast statistical inference with large-scale data. The SOS estimator can achieve a faster convergence rate with the same time complexity as many existing NSP-based subsampling estimators. We obtain the asymptotic distribution of the proposed estimator in the general case with no condition on the ratio between $n$ and $N$, and provide a Monte Carlo-based procedure for statistical inference. Extensive numerical results demonstrate the promising finite-sample performance of the SOS estimator.
	
	The current SOS estimator requires the loss function to be smooth. Developing methods to handle non-smooth loss function is a meaningful topic for future research. There are some loss functions with corners or discontinuities, such as the absolute value loss, hinge loss, or the check function for quantile regression.
	In non-smooth scenarios, one could replace the gradient in the SOS method with a subgradient or some proximal step. However, convergence guarantees and rates might differ from the smooth case. The theoretical properties of SOS, which rely on a well-defined gradient and Hessian, may not hold in their current form with a non-smooth loss function. There are some possible ways to mitigate the above issue. One approach is to use a smoothed approximation of a non-smooth loss (e.g., use the convolution smoothed loss function in place of the check function for quantile regression \citep{he2023smoothed}) which can retain many benefits of the SOS method.
	
	In addition, the SOS estimator is designed for M-estimation problem with a fixed-dimensional parameter. It is of future interest to extend the SOS estimator to general semiparametric estimation problems and problems with high-dimensional parameters. In high-dimensional problems, the Hessian of the loss can be ill-conditioned. Then, the direct use of the one-step update can be less reliable. Using adaptive step sizes and dimension-reduction strategies can potentially help mitigate ill-conditioning. Moreover, many high-dimensional methods incorporate non-smooth penalties (e.g., LASSO’s $l_{1}$-penalty). Such non-smooth regularizers break the differentiability assumption required by the SOS method. Alternatives like subgradient methods, proximal algorithms, or smooth approximations can be used, but the vanilla SOS approach may require adaptation. We leave further investigations on these problems to future works.
			
	\section*{Acknowledgement}
	 This work was supported by the National Natural Science Foundation of China (Grant Number 12501405) and the Fundamental Research Fund of Beijing University of Posts and Telecommunications (Program Number 2023RC47).
		
		\newpage
		\setcounter{section}{0}
		\setcounter{condition}{0}
		\setcounter{theorem}{0}
		\setcounter{table}{0}
		\setcounter{figure}{0}
		\setcounter{example}{0}
		\setcounter{proposition}{0}
		\setcounter{lemma}{0}
		\renewcommand{\thesection}{S\arabic{section}}
		\renewcommand{\thecondition}{S\arabic{condition}}
		\renewcommand{\thetheorem}{S\arabic{theorem}}
		\renewcommand{\thetable}{S\arabic{table}}
		\renewcommand{\thefigure}{S\arabic{figure}}
		\renewcommand{\theexample}{S\arabic{example}}
		\renewcommand{\theproposition}{S\arabic{proposition}}
		\renewcommand{\thelemma}{S\arabic{lemma}}
		{\noindent\LARGE \bf Appendix: proofs for all theoretical results}

\begin{lemma}\label{lem:unif AN}
	Under Conditions (C1)--(C3), we have
	\begin{equation}\label{eq:unif AN}
		\tilde{\btheta}_{\rm uni}-\btheta_{0}= -\left[E\left\{\nabla^{2}L(\bZ;\btheta_{0})\right\}\right]^{-1}\frac{1}{n}\sum_{i\in S}\nabla L(\bZ_{i},\btheta_{0})+o_{P}\left(n^{-1/2}\right).
	\end{equation}
\end{lemma}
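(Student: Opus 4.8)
The plan is to establish the asymptotic linear expansion of $\tilde{\btheta}_{\rm uni}$ by the standard M-estimation argument, but carried out conditionally on the subsample indicator vector $\bR=(R_1,\dots,R_N)$, since the subsample size $|S|$ is random. First I would show consistency: since $\tilde{\btheta}_{\rm uni}$ minimizes $n^{-1}\sum_{i\in S}L(\bZ_i;\btheta)$, and this objective is, by the Lipschitz/moment conditions in (C1)--(C2), uniformly close to $E[L(\bZ;\btheta)]$ over $\Theta$ (the uniform law of large numbers applies because each $i$ is included with probability $n/N$ independently, so $n^{-1}\sum_{i\in S}L(\bZ_i;\btheta)$ has the same expectation as the full-data average), the minimizer converges to $\btheta_0$ in probability. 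With consistency in hand, I would Taylor-expand the first-order condition $n^{-1}\sum_{i\in S}\nabla L(\bZ_i;\tilde{\btheta}_{\rm uni})=\bzero$ around $\btheta_0$:
\begin{equation*}
\bzero = \frac{1}{n}\sum_{i\in S}\nabla L(\bZ_i;\btheta_0) + \left(\frac{1}{n}\sum_{i\in S}\nabla^2 L(\bZ_i;\btheta_0)\right)(\tilde{\btheta}_{\rm uni}-\btheta_0) + \bR_n,
\end{equation*}
where the remainder $\bR_n$ is controlled using the Lipschitz bound on $\nabla^2 L$ from (C1) and $E[G_2(\bZ)^2]<\infty$, so that $\|\bR_n\| \leq \big(n^{-1}\sum_{i\in S}G_2(\bZ_i)\big)\|\tilde{\btheta}_{\rm uni}-\btheta_0\|^2 = O_P(1)\cdot o_P(1)\cdot\|\tilde{\btheta}_{\rm uni}-\btheta_0\|$.

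Next I would argue that $n^{-1}\sum_{i\in S}\nabla^2 L(\bZ_i;\btheta_0) \stackrel{p}{\to} \bbH = E[\nabla^2 L(\bZ;\btheta_0)]$, again because the inclusion probabilities are $n/N$ so this average is an unbiased, consistent estimator of $\bbH$ (its variance is $O(1/n)$ by (C2)); by (C3), $\bbH$ is invertible and the sample Hessian is invertible with probability tending to one. Meanwhile $n^{-1}\sum_{i\in S}\nabla L(\bZ_i;\btheta_0)$ is $O_P(n^{-1/2})$ since it is a mean-zero average (its conditional-on-$\bR$ variance is $O(1/|S|)$ and $|S|/n\to 1$), which gives the preliminary rate $\tilde{\btheta}_{\rm uni}-\btheta_0 = O_P(n^{-1/2})$. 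Plugging this rate back into the remainder bound yields $\|\bR_n\| = O_P(n^{-1})=o_P(n^{-1/2})$, and solving the expansion gives
\begin{equation*}
\tilde{\btheta}_{\rm uni}-\btheta_0 = -\left(\frac{1}{n}\sum_{i\in S}\nabla^2 L(\bZ_i;\btheta_0)\right)^{-1}\frac{1}{n}\sum_{i\in S}\nabla L(\bZ_i;\btheta_0) + o_P(n^{-1/2}).
\end{equation*}
Finally I would replace the sample Hessian by $\bbH$: writing $\big(n^{-1}\sum_{i\in S}\nabla^2 L(\bZ_i;\btheta_0)\big)^{-1} = \bbH^{-1} + o_P(1)$ and noting the leading score term is $O_P(n^{-1/2})$, the difference contributes only $o_P(1)\cdot O_P(n^{-1/2}) = o_P(n^{-1/2})$, which yields \eqref{eq:unif AN}.

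The main obstacle is handling the randomness of the subsample size $|S|$ cleanly. One must be careful that statements like ``uniform law of large numbers'' and ``CLT-type rate $O_P(n^{-1/2})$'' hold when the number of summands is itself random; the cleanest route is to work conditionally on $\bR$ and use that $|S|$ concentrates around $n$ (indeed $|S|/n\to 1$ in probability since $\var(|S|) = N\cdot(n/N)(1-n/N) \leq n$), then transfer the conditional convergence statements to unconditional ones. An alternative is to write every average $n^{-1}\sum_{i\in S}(\cdot) = N^{-1}\sum_{i=1}^N R_i(\cdot)\cdot(N/n)$ and treat $R_i f(\bZ_i)$ as i.i.d.\ summands over $i=1,\dots,N$ with the factor $N/n$ absorbed, which sidesteps the random-index issue entirely; this is likely the more robust bookkeeping and I would adopt it. All other steps are routine given Conditions (C1)--(C3).
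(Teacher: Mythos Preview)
Your proposal is correct and follows essentially the same approach as the paper: Taylor-expand the first-order condition, control the remainder via the Lipschitz bound in (C1) together with $n^{-1}\sum_{i\in S}G_2(\bZ_i)=O_P(1)$, use (C2) to get the sample Hessian close to $\bbH$ and the score $O_P(n^{-1/2})$, then bootstrap the rate. The only cosmetic differences are that the paper splits off the population Hessian $\bbH$ directly in the expansion rather than swapping it in at the end, and it does not separately argue consistency before extracting the $O_P(n^{-1/2})$ rate; your rewriting $n^{-1}\sum_{i\in S}(\cdot)=(N/n)\,N^{-1}\sum_{i=1}^N R_i(\cdot)$ is exactly the device the paper uses implicitly when computing moments like $E[n^{-1}\sum_{i\in S}G_2(\bZ_i)]=E[G_2(\bZ)]$.
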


\begin{proof}[Proof of Lemma \ref{lem:unif AN}]
	By the definition of $\tilde{\btheta}_{\rm uni}$, we have
	\begin{equation}\label{eq: 1}
		\begin{split}
			0=\frac{1}{n}\sum_{i\in S}\nabla L(\bZ_{i};\tilde{\btheta}_{\rm uni})
			&=\frac{1}{n}\sum_{i\in S}\nabla L(\bZ_{i};\btheta_{0})+E\{\nabla^{2} L(\bZ;\btheta_{0})\}(\tilde{\btheta}_{\rm uni}-\btheta_{0})\\
			&\quad+\left[\frac{1}{n}\sum_{i\in S}\nabla^{2} L(\bZ_{i};\btheta_{0})-E\left\{\nabla^{2} L(\bZ;\btheta_{0})\right\}\right](\tilde{\btheta}_{\rm uni}-\btheta_{0})\\
			&\quad+\frac{1}{n}\sum_{i\in S}\left\{\nabla^{2} L(\bZ_{i};\bar{\btheta})-\nabla^{2}L(\bZ_{i};\btheta_{0})\right\}(\tilde{\btheta}_{\rm uni}-\btheta_{0}),
		\end{split}
	\end{equation}
	where $\bar{\btheta}$ is between $\btheta_{0}$ and $\tilde{\btheta}_{\rm uni}$. By Condition (C1), we have
	\begin{equation*}
		\left\|\frac{1}{n}\sum_{i\in S}\left\{\nabla^{2} L(\bZ_{i};\bar{\btheta})-\nabla^{2}L(\bZ_{i};\btheta_{0})\right\}\right\|\leq \frac{1}{n}\sum_{i\in S}G_{2}(\bZ_{i})\|\bar{\btheta}-\btheta_{0}\|.
	\end{equation*}
	Since $E\left\{n^{-1}\sum_{i\in S}G_{2}(\bZ_{i})\right\}=E\left\{G_{2}(\bZ)\right\}$ and $\var\left\{n^{-1}\sum_{i\in S}G_{2}(\bZ_{i})\right\}\leq n^{-1}E\left[\left\{G_{2}(\bZ)\right\}^{2}\right]$, it follows from Chebyshev's inequality that $n^{-1}\sum_{i\in S}G_{2}(\bZ_{i})=O_{P}(1)$ and hence
	\begin{equation}\label{eq: 11}
		\frac{1}{n}\sum_{i\in S}\left\{\nabla^{2} L(\bZ_{i};\bar{\btheta})-\nabla^{2}L(\bZ_{i};\btheta_{0})\right\}=O_{P}(\|\tilde{\btheta}_{\rm uni}-\btheta_{0}\|).
	\end{equation}
	Additionally, applying Chebyshev's inequality under Condition (C2), we obtain
	\begin{equation}\label{eq: 12}
		\frac{1}{n}\sum_{i\in S}\nabla^{2} L(\bZ_{i};\btheta_{0})-E[\nabla^{2} L(\bZ;\btheta_{0})]=O_{P}(n^{-1/2}).
	\end{equation}
	Then \eqref{eq: 1} together with \eqref{eq: 11} and \eqref{eq: 12} implies 
	\begin{equation}\label{eq: 2}
		-\frac{1}{n}\sum_{i\in S}\nabla L(\bZ_{i};\btheta_{0})=E\left\{\nabla^{2} L(\bZ;\btheta_{0})\right\}(\tilde{\btheta}_{\rm uni}-\btheta_{0})+O_{P}(n^{-1/2}\|\tilde{\btheta}_{\rm uni}-\btheta_{0}\|+\|\tilde{\btheta}_{\rm uni}-\btheta_{0}\|^{2}).
	\end{equation}
	By the definition of $\btheta_{0}$, we have $E\left\{\nabla L(\bZ;\btheta_{0})\right\}=0$. Then by Chebyshev's inequality, we have $n^{-1}\sum_{i\in S}\nabla L(\bZ_{i};\btheta_{0})=O_{P}(n^{-1/2})$ under Condition (C2). This together with \eqref{eq: 2} proves  $\|\tilde{\btheta}_{\rm uni}-\btheta_{0}\|=O_{P}(n^{-1/2})$ under Condition (C3), and hence the result in Lemma \ref{lem:unif AN} is proved.

\end{proof}
\begin{proof}[Proof of Theorem 1]
	In the proof, for simplicity of notation, we denote $\bv^{\otimes2}$ as the Kronecker product $\bv\otimes\bv$ for any vector $\bv$. By Taylor's expansion and some algebras, we have
	\begin{equation}\label{eq: 3}
		\begin{split}
			&\frac{1}{n}\sum_{i\in S}\nabla^{2}L(\bZ_{i};\tilde{\btheta}_{\rm uni})\left(\tilde{\btheta}_{\rm SOS}-\btheta_{0}\right)\\
			&=\frac{1}{n}\sum_{i\in S}\nabla^{2}L(\bZ_{i};\tilde{\btheta}_{\rm uni})\left(\tilde{\btheta}_{\rm uni}-\btheta_{0}\right) - \frac{1}{N}\sum_{i=1}^{N}\nabla L(\bZ_{i},\tilde{\btheta}_{\rm uni})\\
			&=\frac{1}{n}\sum_{i\in S}\nabla^{2}L(\bZ_{i};\btheta_{0})\left(\tilde{\btheta}_{\rm uni}-\btheta_{0}\right) + \left\{\frac{1}{n}\sum_{i\in S}\nabla^{2}L(\bZ_{i};\tilde{\btheta}_{\rm uni}) - \frac{1}{n}\sum_{i\in S}\nabla^{2}L(\bZ_{i};\btheta_{0})\right\}\left(\tilde{\btheta}_{\rm uni}-\btheta_{0}\right)\\
			&\quad - \frac{1}{N}\sum_{i=1}^{N}\nabla L(\bZ_{i},\btheta_{0}) - \left\{\frac{1}{N}\sum_{i=1}^{N}\nabla L(\bZ_{i},\tilde{\btheta}_{\rm uni}) - \frac{1}{N}\sum_{i=1}^{N}\nabla L(\bZ_{i},\btheta_{0})\right\}\\
			&=\frac{1}{n}\sum_{i\in S}\nabla^{2}L(\bZ_{i};\btheta_{0})\left(\tilde{\btheta}_{\rm uni}-\btheta_{0}\right) + \frac{1}{n}\sum_{i\in S}\nabla^{3}L(\bZ_{i};\breve{\btheta})\left(\tilde{\btheta}_{\rm uni}-\btheta_{0}\right)^{\otimes 2}\\
			&\quad - \frac{1}{N}\sum_{i=1}^{N}\nabla L(\bZ_{i},\btheta_{0}) - \left\{\frac{1}{N}\sum_{i=1}^{N}\nabla^{2} L(\bZ_{i},\btheta_{0})\left(\tilde{\btheta}_{\rm uni}-\btheta_{0}\right) +  \frac{1}{2N}\sum_{i=1}^{N}\nabla^{3} L(\bZ_{i},\bar{\btheta})(\tilde{\btheta}_{\rm uni}-\btheta_{0})^{\otimes2}\right\}\\
			&=\left\{\frac{1}{n}\sum_{i\in S}\nabla^{2} L(\bZ_{i},\btheta_{0})-\frac{1}{N}\sum_{i=1}^{N}\nabla^{2} L(\bZ_{i},\btheta_{0})\right\}(\tilde{\btheta}_{\rm uni}-\btheta_{0}) -\frac{1}{N}\sum_{i=1}^{N}\nabla L(\bZ_{i},\btheta_{0})\\
			&\quad + \left\{\frac{1}{n}\sum_{i\in S}\nabla^{3} L(\bZ_{i},\breve{\btheta}) - \frac{1}{2N}\sum_{i=1}^{N}\nabla^{3} L(\bZ_{i},\bar{\btheta})\right\}(\tilde{\btheta}_{\rm uni}-\btheta_{0})^{\otimes2}\\
			& = J_{1} + J_{2} + J_{3},
		\end{split}
	\end{equation}
	where $\bar{\btheta}$ and $\breve{\btheta}$ are between $\btheta_{0}$ and $\tilde{\btheta}_{\rm uni}$. 
	For $J_{1}$, by Chebyshev's inequality and Condition (C2), we have
	\begin{equation*}
		\frac{1}{n}\sum_{i\in S}\nabla^{2} L(\bZ_{i},\btheta_{0}) = E\left\{\nabla^{2} L(\bZ,\btheta_{0})\right\} + O_{P}(n^{-1/2})
	\end{equation*}
	and
	\begin{equation*}
		\frac{1}{N}\sum_{i=1}^{N}\nabla^{2} L(\bZ_{i},\btheta_{0}) = E\left\{\nabla^{2} L(\bZ,\btheta_{0})\right\} + O_{P}(N^{-1/2}).
	\end{equation*}
	Then we have
	\begin{equation*}
		\frac{1}{n}\sum_{i\in S}\nabla^{2} L(\bZ_{i},\btheta_{0})-\frac{1}{N}\sum_{i=1}^{N}\nabla^{2} L(\bZ_{i},\btheta_{0}) = O_{P}(n^{-1/2}).
	\end{equation*}
	Recalling the result \eqref{eq:unif AN} in Lemma \ref{lem:unif AN}, we have
	\begin{equation}\label{eq: 1 r}
		\begin{split}
			J_{1} 
			&=  \left\{\frac{1}{n}\sum_{i\in S}\nabla^{2} L(\bZ_{i},\btheta_{0})-\frac{1}{N}\sum_{i=1}^{N}\nabla^{2} L(\bZ_{i},\btheta_{0})\right\}\left\{-\left[E\left\{\nabla^{2}L(\bZ;\btheta_{0})\right\}\right]^{-1}\frac{1}{n}\sum_{i\in S}\nabla L(\bZ_{i},\btheta_{0})\right\}\\
			&\quad+o_{P}\left(n^{-1}\right)
		\end{split}
	\end{equation}
	For $J_{3}$, we have
	\begin{equation}\label{eq: J3 process}
		\begin{split}
			&\Big\|\frac{1}{N}\sum_{i=1}^{N}\nabla^{3} L(\bZ_{i},\bar{\theta})-E[\nabla^{3} L(\bZ,\theta_{0})]\Big\|\\
			&\leq \Big\|\frac{1}{N}\sum_{i=1}^{N}\left\{\nabla^{3} L(\bZ_{i},\bar{\theta})-\nabla^{3} L(\bZ_{i},\theta_{0})\right\}\Big\| +\Big\|\frac{1}{N}\sum_{i=1}^{N}\nabla^{3} L(\bZ_{i},\theta_{0})-E[\nabla^{3} L(\bZ,\theta_{0})]\Big\|\\
			& = \frac{1}{N}\sum_{i=1}^{N}\big\|\nabla^{3} L(\bZ_{i},\bar{\theta})-\nabla^{3} L(\bZ_{i},\theta_{0})\big\|+O_{P}(N^{-1/2})\\
			&\leq \frac{1}{N}\sum_{i=1}^{N}G_{3}(Z)\|\bar{\theta}-\theta_{0}\|+O_{P}(N^{-1/2})\\
			&\leq E\{G_{3}(Z)\}\|\tilde{\theta}_{\rm uni}-\theta_{0}\|+O_{P}(N^{-1/2})\\
			& = O_{P}(n^{-1/2}).
		\end{split}
	\end{equation}
	The first equality in \eqref{eq: J3 process} follows from Condition (C2) and Chebyshev's inequality, the second inequality in \eqref{eq: J3 process} is derived using Condition (C1), the third inequality in \eqref{eq: J3 process} is obtained using Condition (C1) and the relationship between $\bar{\btheta}$ and $\tilde{\btheta}_{\rm uni}$, and the last equality is established by applying Lemma \ref{lem:unif AN}.
	Then we have
	\begin{equation}\label{eq: 31}
		\frac{1}{N}\sum_{i=1}^{N}\nabla^{3} L(\bZ_{i},\bar{\theta})
		=E[\nabla^{3} L(\bZ,\theta_{0})]+O_{P}(n^{-1/2}).
	\end{equation}
	Through a similar proof process as that used in proving \eqref{eq: 31}, we can show that 
	\begin{equation}\label{eq: 32}
		\frac{1}{n}\sum_{i\in S}\nabla^{3} L(\bZ_{i},\breve{\btheta})=E[\nabla^{3} L(\bZ_{i},\btheta_{0})]+O_{P}(n^{-1/2}).
	\end{equation} 
	Then combing \eqref{eq: 31} and \eqref{eq: 32}, we have 
	\begin{equation}\label{eq: 31 res}
		\frac{1}{n}\sum_{i\in S}\nabla^{3} L(\bZ_{i},\breve{\btheta}) - \frac{1}{2N}\sum_{i=1}^{N}\nabla^{3} L(\bZ_{i},\bar{\btheta}) = \frac{1}{2}E[\nabla^{3} L(\bZ_{i},\btheta_{0})]  +  O_{P}(n^{-1/2}).
	\end{equation}
	In addition, by Conditions (C2),(C3) and Chebyshev's inequality, we have
	\begin{equation*}
		\left[E\left\{\nabla^{2}L(\bZ;\btheta_{0})\right\}\right]^{-1}\frac{1}{n}\sum_{i\in S}\nabla L(\bZ_{i},\btheta_{0}) = O_{P}(n^{-1/2}).
	\end{equation*}
	Then recalling the result \eqref{eq:unif AN} in Lemma \ref{lem:unif AN}, we have
	\begin{equation}\label{eq: 32 res}
		(\tilde{\btheta}_{\rm uni}-\btheta_{0})^{\otimes2} 
		= \left(\left[E\left\{\nabla^{2}L(\bZ;\btheta_{0})\right\}\right]^{-1}\frac{1}{n}\sum_{i\in S}\nabla L(\bZ_{i},\btheta_{0})\right)^{\otimes2}  + o_{P}(n^{-1}).
	\end{equation}
	Then \eqref{eq: 32 res} together with \eqref{eq: 31 res} implies
	\begin{equation}\label{eq: 3 r}
		\begin{split}
			J_{3} 
			&=  \left[E\left\{\nabla^{3} L(\bZ_{i},\btheta_{0})\right\}/2\right]\left(\left[E\left\{\nabla^{2}L(\bZ;\btheta_{0})\right\}\right]^{-1}\frac{1}{n}\sum_{i\in S}\nabla L(\bZ_{i},\btheta_{0})\right)^{\otimes2}\\
			&\quad+o_{P}\left(n^{-1}\right)
		\end{split}
	\end{equation}
	Then by plugging the expansions of $J_{1}$ in \eqref{eq: 1 r} and $J_{3}$ in \eqref{eq: 3 r} to the rightmost side of the Equation \eqref{eq: 3}, we have
	\begin{equation}\label{eq: comp}
		\begin{split}
			&\frac{1}{n}\sum_{i\in S}\nabla^{2}L(\bZ_{i};\tilde{\btheta}_{\rm uni})\left(\tilde{\btheta}_{\rm SOS}-\btheta_{0}\right)\\
			&=\left\{\frac{1}{n}\sum_{i\in S}\nabla^{2} L(\bZ_{i},\btheta_{0})-\frac{1}{N}\sum_{i=1}^{N}\nabla^{2} L(\bZ_{i},\btheta_{0})\right\}\left\{-\left[E\left\{\nabla^{2}L(\bZ;\btheta_{0})\right\}\right]^{-1}\frac{1}{n}\sum_{i\in S}\nabla L(\bZ_{i},\btheta_{0})\right\}\\
			&\quad -\frac{1}{N}\sum_{i=1}^{N}\nabla L(\bZ_{i},\btheta_{0})\\
			&\quad + \left[E\left\{\nabla^{3} L(\bZ_{i},\btheta_{0})\right\}/2\right]\left(\left[E\left\{\nabla^{2}L(\bZ;\btheta_{0})\right\}\right]^{-1}\frac{1}{n}\sum_{i\in S}\nabla L(\bZ_{i},\btheta_{0})\right)^{\otimes2}+o_{P}(n^{-1}).
		\end{split}
	\end{equation}
	Through a similar proof process as that used in proving \eqref{eq: 31}, we can show that 
	\begin{equation}\label{eq: comp 0 res}
		\begin{split}
			\frac{1}{n}\sum_{i\in S}\nabla^{2}L(\bZ_{i};\tilde{\btheta}_{\rm uni})=E\left\{\nabla^{2}L(\bZ;\btheta_{0})\right\}+O_{P}(n^{-1/2}).
		\end{split}
	\end{equation}
	Then, by Slutsky's theorem, we have
	\begin{equation}
		\begin{split}
			&\tilde{\btheta}_{\rm SOS}-\btheta_{0}\\
			&=\bbH^{-1}\left\{\frac{1}{n}\sum_{i\in S}\nabla^{2} L(\bZ_{i},\btheta_{0})-\frac{1}{N}\sum_{i=1}^{N}\nabla^{2} L(\bZ_{i},\btheta_{0})\right\}\left\{-\bbH^{-1}\frac{1}{n}\sum_{i\in S}\nabla L(\bZ_{i},\btheta_{0})\right\}\\
			&\quad -\bbH^{-1}\frac{1}{N}\sum_{i=1}^{N}\nabla L(\bZ_{i},\btheta_{0})\\
			&\quad + \bbH^{-1}\left(\bbM/2\right)\left\{\bbH^{-1}\frac{1}{n}\sum_{i\in S}\nabla L(\bZ_{i},\btheta_{0})\right\}^{\otimes2}+o_{P}(n^{-1}),
		\end{split}
	\end{equation}
	where $\bbH = E\left\{\nabla^{2}L(\bZ;\btheta_{0})\right\}$ and $\bbM = E\left\{\nabla^{3} L(\bZ,\btheta_{0})\right\}$.
	
	Recalling the definitions $c_1 = \lim_{n,N\to \infty}\min(\sqrt{N},n)/n$ and $c_{2} =  \lim_{n,N\to \infty}\min(\sqrt{N},n)/\sqrt{N}$ in Theorem 1, we have
	\begin{equation}
		\begin{split}
			&\min(\sqrt{N},n)\left(\tilde{\btheta}_{\rm SOS}-\btheta_{0}\right)\\
			&=- c_{1}\bbH^{-1}\left[\sqrt{n}\left\{\frac{1}{n}\sum_{i\in S}\nabla^{2} L(\bZ_{i},\btheta_{0})-\frac{1}{N}\sum_{i=1}^{N}\nabla^{2} L(\bZ_{i},\btheta_{0})\right\}\right]\left[\bbH^{-1}\left\{\sqrt{n}\frac{1}{n}\sum_{i\in S}\nabla L(\bZ_{i},\btheta_{0})\right\}\right]\\
			&\quad -c_{2}\bbH^{-1}\sqrt{N}\frac{1}{N}\sum_{i=1}^{N}\nabla L(\bZ_{i},\btheta_{0})\\
			&\quad+ c_{1}\bbH^{-1}\left(\bbM/2\right)\left\{\bbH^{-1}\sqrt{n}\frac{1}{n}\sum_{i\in S}\nabla L(\bZ_{i},\btheta_{0})\right\}^{\otimes 2} + o_{P}\left(1\right).
		\end{split}
	\end{equation}
	
	Let $\bU_{N} = (\bU_{N,1}^{\T},\bU_{N,2}^{\T},\bU_{N,3}^{\T})^{\T}$ be a $2d+(d^2-d)/2+d$ dimensional random vector sequence with $\bU_{N,1} = 1/\sqrt{n}\sum_{i\in S}\nabla L(\bZ_{i},\btheta_{0})$, $\bU_{N,2} = 1/\sqrt{N}\sum_{i=1}^{N}\nabla L(\bZ_{i},\btheta_{0})$, $\bU_{N,3}$ being the vectorized form of the upper triangle of 
	$$\sqrt{n}\left\{\frac{1}{n}\sum_{i\in S}\nabla^{2} L(\bZ_{i},\btheta_{0})-\frac{1}{N}\sum_{i=1}^{N}\nabla^{2} L(\bZ_{i},\btheta_{0})\right\},$$
	and
	\begin{equation}
		\bg(\bU_{N}) 
		= c_{1}\bbH^{-1}(\bbM/2)(\bbH^{-1}\bU_{N,1})^{\otimes 2}-c_{2}\bbH^{-1}\bU_{N,2} -c_{1}\bbH^{-1}\bU_{C,N}(\bbH^{-1}\bU_{N,1}),
	\end{equation}
	with $\bbU_{C,N}$ being a $d\times d$ symmetrical matrix and the upper triangle matrix of $\bbU_{C,N}$ consisting of the elements in $\bU_{N,3}$ arranged in rows. Then we have
	\begin{equation}
		\min(\sqrt{N},n)\left(\tilde{\btheta}_{\rm SOS}-\btheta_{0}\right)=\bg(\bU_{N}) + o_{P}\left(1\right).
	\end{equation}
	
	To prove the result in Theorem 1, according to the continuous mapping in Theorem 2.3 of \cite{Vaart2000AS}, it suffices to prove that
	$\bU_{N}\stackrel{d}{\to}N(0,\bbV)$. 
	Note that $\bU_{N}$ can be written as 
	\begin{equation}
		\bU_{N} = \frac{1}{N}\sum_{i=1}^{N}
		\begin{pmatrix}
			\sqrt{n}\frac{R_{i}}{n/N}\nabla L(\bZ_{i},\btheta_{0})\\
			\sqrt{N}\nabla L(\bZ_{i},\btheta_{0})\\
			\sqrt{n}\left(\frac{R_{i}}{n/N}-1\right)vec[upper\{\nabla^{2} L(\bZ_{i},\btheta_{0})\}]
		\end{pmatrix}
		\triangleq \frac{1}{N}\sum_{i=1}^{N}\bK_{i},
	\end{equation}
	where $vec(\bbA)$ and $upper(\bbA)$ respectively denote the vectorized form and the upper triangle matrix of $\bbA$. Note that $N^{-2}\sum_{i=1}^{N}\Cov(\bK_{i})\to \bbV$. By applying the Lindeber-Feller central limit theorem in Proposition 2.27 of \cite{Vaart2000AS}, to prove $\bU_{N}\stackrel{d}{\to}N(0,\bbV)$, it suffices to verify the Lindeberg condition
	\begin{equation}\label{condi1}
		\frac{1}{N^{2}}\sum_{i=1}^{N}E[\|\bK_{i}\|^{2}1\{\|\bK_{i}\|>N\epsilon\}]\to 0
	\end{equation}
	for every $\epsilon$.	
	Since $E[\|\bK_{i}\|^{2}1\{\|\bK_{i}\|>N\epsilon\}]\leq E[\|\bK_{i}\|^{2+\tau}/(N\epsilon)^{\tau}]$ holds for any $\tau>0$, then proving \eqref{condi1} reduces to showing that
	\begin{equation*}\label{condi: 1}
		E[\|\bK_{i}\|^{2+\tau}]=o(N^{1+\tau})
	\end{equation*}
	for some $\tau>0$.
	By the definition of $\bK_{i}$ and Condition (C2), we have
	\begin{equation*}
		E[\|\bK_{i}\|^{2+\tau}]=O(N^{1+\tau}/n^{\tau/2}+N^{1+\tau/2}) = o(N^{1+\tau}).
	\end{equation*}
\end{proof}

\begin{proof}[Proof of Corollary 1]
	
	Similar to Lemma \ref{lem:unif AN}, under Conditions (C1)--(C3), we have
	\begin{equation}\label{eq:full AN}
		\hat{\btheta}_{\rm full}-\btheta_{0}= -\left[E\left\{\nabla^{2}L(\bZ;\btheta_{0})\right\}\right]^{-1}\frac{1}{N}\sum_{i=1}^{N}\nabla L(\bZ_{i},\btheta_{0})+o_{P}\left(N^{-1/2}\right).
	\end{equation}
	Then by the central limit theorem, we have $\sqrt{N}(\hat{\btheta}_{\rm full}-\btheta_{0})\stackrel{d}{\to}N(0,\bbV_{c})$.
	Then to prove the results in Corollary 1, it suffices to prove that $\sqrt{N}(\tilde{\btheta}_{\rm SOS}-\hat{\btheta}_{\rm full})\stackrel{p}{\to}0$. 
	Note that
	\begin{equation}
		\begin{split}
			&\tilde{\btheta}_{\rm SOS}-\hat{\btheta}_{\rm full}\\
			&= \left[\left\{\frac{1}{n}\sum_{i\in S}\nabla^2 L(\bZ_{i};\tilde{\btheta}_{\rm uni})\right\}^{-1}-\left\{\frac{1}{N}\sum_{i=1}^{N}\nabla^2 L(\bZ_{i};\bar{\btheta})\right\}^{-1}\right]\frac{1}{N}\sum_{i=1}^{N}\nabla L(\bZ_{i};\tilde{\btheta}_{\rm uni})\\
			&=T_{1}T_{2},
		\end{split}
	\end{equation}
	where  $\bar{\btheta}$ is between $\tilde{\btheta}_{\rm uni}$ and $\hat{\btheta}_{\rm full}$.	
	For $T_{1}$,  by Conditions (C1), (C2) and Chebyshev's inequality, we have $n^{-1}\sum_{i\in S}\nabla^2 L(\bZ_{i};\tilde{\btheta}_{\rm uni})=E\{\nabla^{2}L(\bZ;\btheta_{0})\}+O_{P}(n^{-1/2})$ and $N^{-1}\sum_{i=1}^{N}\nabla^2 L(\bZ_{i};\bar{\btheta})=E\{\nabla^{2}L(\bZ;\btheta_{0})\}+O_{P}(n^{-1/2})$.
	Then by Condition (C3), we have
	$T_{1} = O_{P}(n^{-1/2})$.
	In addition, by the definition of $\tilde{\btheta}_{\rm uni}$ and Conditions (C1),(C2), we have $T_{2} = O_{P}(n^{-1/2})$.
	Then we have $\tilde{\btheta}_{\rm SOS}-\hat{\btheta}_{\rm full}=O_{P}(n^{-1})=o_{P}(N^{-1/2})$.
	
\end{proof}
		
		\bibliographystyle{asa}
		\bibliography{SOS-arXiv.bib}
\end{document}